\newcommand{\formatrule}[1]{(\textit{#1})\xspace}
\newcommand{\apply}{\formatrule{apply}}
\newcommand{\ifTrue}{\formatrule{ifTrue}}
\newcommand{\ifFalse}{\formatrule{ifFalse}}
\newcommand{\new}{\formatrule{new}}
\newcommand{\get}{\formatrule{get}}
\newcommand{\set}{\formatrule{set}}
\newcommand{\fork}{\formatrule{fork}}
\newcommand{\join}{\formatrule{join}}
\newcommand{\joinepsilon}{\formatrule{join$_\epsilon$}}
\newcommand{\leftto}{\leftarrow}
\newcommand{\toeq}{\to_\abeq}
\newcommand{\lefttoeq}{\leftto_\abeq}
\DeclareSymbolFont{lasy}{U}{lasy}{m}{n}
\DeclareMathSymbol\leadsto{\mathrel}{lasy}{"3B}
\newcommand{\theory}[1]{\href{https://www.isa-afp.org/browser_info/current/AFP/Concurrent_Revisions/#1.html}{\texttt{#1.thy}}\xspace}
\newcommand{\data}{\theory{Data}}
\newcommand{\occurrences}{\theory{Occurrences}}
\newcommand{\renaming}{\theory{Renaming}}
\newcommand{\substitution}{\theory{Substitution}}
\newcommand{\operationalsemantics}{\theory{OperationalSemantics}}
\newcommand{\executions}{\theory{Executions}}
\newcommand{\determinacy}{\theory{Determinacy}}
\newcommand\bnfmid{\;\mid\;}
\newcommand{\abeq}{\mathcal{C}}
\newcommand{\E}{\mathcal{E}}
\newcommand{\R}{\mathcal{R}}
\newcommand{\V}{\mathcal{V}}
\newcommand{\rfork}{\mathsf{rfork}}
\newcommand{\rjoin}{\mathsf{rjoin}}
\newcommand{\unit}{\mathsf{unit}}
\newcommand{\Ref}{\mathsf{ref}}
\newcommand{\Read}{\mathsf{!}}
\newcommand{\dom}{\mathsf{dom}}
\newcommand{\doms}{\mathsf{doms}}
\newcommand{\ran}{\mathsf{ran}}
\newcommand{\LID}{\textit{LID}}
\newcommand{\RID}{\textit{RID}}
\newcommand{\id}{\mathit{id}}
\newcommand{\subsume}{\mathcal{S}}
\newcommand{\subsnap}{\mathcal{A}}
\newcommand{\happrox}{\hspace{.6mm}\rotatebox[origin=c]{270}{$\approx$}}
\begin{document}
	
\title{Formalizing Determinacy of Concurrent Revisions}

\author{Roy Overbeek}
\affiliation{
	\department{Department of Computer Science}              
	\institution{Vrije Universiteit Amsterdam}            
	\city{Amsterdam}
	\country{The Netherlands}                    
}
\email{r.overbeek@vu.nl}          

\begin{CCSXML}
	<ccs2012>
	<concept>
	<concept_id>10003752.10003790.10002990</concept_id>
	<concept_desc>Theory of computation~Logic and verification</concept_desc>
	<concept_significance>500</concept_significance>
	</concept>
	<concept>
	<concept_id>10003752.10003753.10003761.10003762</concept_id>
	<concept_desc>Theory of computation~Parallel computing models</concept_desc>
	<concept_significance>300</concept_significance>
	</concept>
	</ccs2012>
\end{CCSXML}

\ccsdesc[500]{Theory of computation~Logic and verification}
\ccsdesc[300]{Theory of computation~Parallel computing models}

\keywords{Concurrency control models, proof assistants, Isabelle/HOL}

\begin{abstract}
	Concurrent revisions is a concurrency control model designed to guarantee \emph{determinacy}, meaning that the outcomes of programs are uniquely determined. This paper describes an Isabelle\slash HOL formalization of the model's operational semantics and proof of determinacy. We discuss and resolve subtle ambiguities in the operational semantics and simplify the proof of determinacy. Although our findings do not appear to correspond to bugs in implementations, the formalization highlights some of the challenges involved in the design and verification of concurrency control models.
\end{abstract}

\maketitle

\section{Introduction}
\label{sec:introduction}

\emph{Concurrency control models} provide abstractions that simplify the task of writing concurrent software. Such abstractions may assure the programmer, for instance, that intermediate program states of a process are not visible to other processes (\emph{isolation}), or that blocks of instructions execute as a single indivisible unit (\emph{atomicity}). These assumptions simplify reasoning about a program's behavior and prevent undesirable interactions between processes.

\emph{Concurrent revisions} (\emph{CR}) is a concurrency control model originally published by Burckhardt et al.\ in 2010~\cite{burckhardt2010concurrent}. Unlike the relatively established family of \emph{transactional memory} (\emph{TM})~\cite{herlihy1993transactional,shavit1995software} models, which take inspiration from database transactions, the design of CR is modeled after branching version control systems such as Git. This unorthodox starting point gives rise to some distinguishing features, including:
\begin{itemize}
	\item \emph{Non-linear program state history}. In traditional concurrent programming models, it makes sense to speak of `the' state of shared data. Any local views that processes have of this state may be considered deviating, e.g., because they are stale, or because an update is being prepared locally. By contrast, in CR there is no such singular shared state: there exists only the collection of local views on shared data.
	\item \emph{Deterministic conflict resolution}. Processes must sometimes converge, while their local views may conflict. Rather than issuing rollbacks in the event of conflict (as in TM), in CR the conflict is resolved at run time using \emph{deterministic merge functions}. Which merge function to apply is context-dependent, and is declaratively defined by the programmer using semantic type annotations.
	\item \emph{Determinacy}. A concurrency control model is \emph{determinate} if the outcome of programs is guaranteed to be uniquely determined~\cite{karp1966properties}. Most models are not determinate, since scheduling may influence a program's outcome. For instance, the outcome of a lock-based approach may depend on which thread first acquires a particular lock. For TM, the outcome may depend on which transaction is successfully committed first. By contrast, any CR program (satisfying some simple conditions) is determinate, regardless of asynchronous execution and scheduling. This simplifies the life of the programmer, who no longer needs to reason about the timing of events.
\end{itemize}

CR has been implemented in C\# by Burckhardt et al.~\cite{burckhardt2010concurrent}. This implementation is accompanied by a case study in the form of a game implementation, for which a considerable speedup is observed relative to a sequential version, and the corresponding code is arguably easy to reason about. A Haskell implementation followed later by Leijen et al.~\cite{leijen2011prettier}. The implementations are supported by a formal operational semantics by Burckhardt and Leijen \cite{burckhardt2011semantics} (supplemented with a relevant technical report \cite{burckhardt2010semanticstech}), which contains a proof of determinacy as one of its central results.

Concurrency control models, being intricate pieces of concurrent software, are generally interesting targets for formal specification and verification. There are numerous formal approaches to the family of TM models~\cite{harris2005composable, abadi2008semantics, cohen2008mechanical, doherty2013towards, doherty2017proving}, for instance, and some of these efforts uncovered bugs in popular models that lead to fixes in existing software libraries~\cite{manovit2006testing}. The operational semantics of concurrent revisions, however, has not yet been formalized. 

This paper contributes the first step towards the formal verification of CR, using the formal operational semantics of \cite{burckhardt2011semantics} (henceforth referred to as the ``original account'') as our basis. The formalization was performed using the proof assistant \mbox{Isabelle\slash HOL} \cite{nipkow2002isabelle}. Our main results are
\begin{itemize}
	\item the identification and resolution of subtle ambiguities in the side conditions of the rules of the operational semantics, resulting in the strengthening of a side condition and the elimination of three redundant side conditions; and
	\item the mechanization and simplification of the proof of determinacy, in which we show that the proof relies on a property not mentioned in the original account.
\end{itemize}
The verification of an orthogonal desired property, namely, the existence of unique greatest common ancestors in revision diagrams \cite{burckhardt2011semantics} (the meaning of which will become clearer in Section 2), is left for future work.

The formalization artifact is available at the Archive of Formal Proofs \cite{overbeek2018formalization} and consists of about 3000 lines of Isabelle code. More details can be found in the author's master's thesis~\cite{overbeek2018thesis}.

In the remainder of this paper, we first provide an overview of CR and describe its formal semantics (Section~\ref{sec:concurrent-revisions}). Then, we explain the formalization in three parts, covering respectively preliminary aspects (Section~\ref{sec:formalization-preliminaries}), the operational semantics (Section~\ref{sec:operational-semantics}) and the proof of determinacy (Section~\ref{sec:determinacy}). Finally, we discuss the significance of our findings, in part by considering CR implementations and related work (Section~\ref{sec:discussion-and-related-work}).
\section{Concurrent Revisions}
\label{sec:concurrent-revisions}

In this section we first give an informal, high-level overview of CR (Section \ref{sec:overview}) exhibiting the central ideas. Then, we systematically describe and comment on the formal semantics as defined in the original account (Section~\ref{sec:formal-semantics}). 

\subsection{Overview}
\label{sec:overview}

The central unit of concurrency in the CR model is the \emph{revision}. A revision can be thought of as a process evaluating an expression $e$ using a (conceptually) isolated, local \emph{store} $\gamma = \{ l_1 \mapsto v_1, \ldots, l_n \mapsto v_n \}$, which maps \emph{locations} $l_i$ to \emph{values} $v_i$. A revision is uniquely identified by an \emph{identifier}. All computation within the model takes place within some revision. Initially, there is only one revision called the \emph{main revision}. We write $\{ r \mapsto \langle \gamma, e \rangle \}$ to denote a program state in which revision $r$ evaluates $e$ using store $\gamma$. 

Revisions execute in complete isolation from one another, unless an explicit synchronization operation -- fork or join -- is performed.

When a revision $r_1$ \emph{forks} some expression $e$, a fresh revision $r_2$ is created that evaluates $e$. Revision $r_2$ is initialized with a copy of $r_1$'s store (a \emph{snapshot}), and the identifier $r_2$ is exposed to $r_1$. Let $\E[e]$ denote an expression where $\E[ \ ]$ represents an evaluation context around $e$. Then
\[
\{r_1 \mapsto \langle \gamma, \E[\rfork \ e] \rangle \} \to
\{r_1 \mapsto \langle \gamma, \E[r_2] \rangle, r_2 \mapsto \langle \gamma, e \rangle \}
\]
represents an example in which $r_1$ forks $e$. (Informally, we also say that $r_1$ forks $r_2$.)

When revision $r_1$ has a reference to $r_2$, then $r_1$ can \emph{join} $r_2$. This causes $r_1$ to block until $r_2$ terminates. Once $r_2$ terminates, the store of $r_2$ is \emph{merged} into $r_1$'s store, and $r_2$ ceases to exist. Joining a nonexistent revision is considered an error. If $e$ is in normal form (signifying termination of $r_2$), then
\[
\begin{array}{llll}
\{r_1 \mapsto \langle \gamma_1, \E[\rjoin \ r_2] \rangle, r_2 \mapsto \langle \gamma_2, e \rangle \} \to_{r_1} \\
\{r_1 \mapsto \langle \mathcal{M}(\gamma_1, \gamma_2), \E[ \unit ] \rangle \}
\end{array}
\]
represents an example in which $r_1$ joins $r_2$, with $\mathcal{M}$ representing the merge function.

To explain how the merge function $\mathcal{M}$ works, we first introduce the notion of a \emph{revision diagram}, which visualizes the interactions between revisions. In these diagrams, solid arrows depict steps within revisions, and dotted arrows depict fork and join relations between revisions. The following is a simple example, in which four states are labeled:
\begin{center}
		\begin{tikzcd}[row sep=0.2em, column sep=1em, fork/.style={dotted, bend left=35pt, opacity=0.8}, join/.style={dotted, bend left=35pt, opacity=0.8}] 
	r_2 \hspace{-2mm} & &  & \cdot \arrow[r]  & \cdot \arrow[r]  & \cdot \arrow[r]  & \cdot \arrow[r]  & c \arrow[rd, join]  &  &  \\
	r_1 \hspace{-2mm} & \cdot \arrow[r]  & a \arrow[r]  \arrow[ru, fork]  & \cdot \arrow[r]  & b \arrow[rrrr]  &  &   &  & d
\end{tikzcd}
\end{center}
In state $a$, main revision $r_1$ forks $r_2$. In state $b$, $r_1$ initiates a join on $r_2$, which blocks until $r_2$ reaches its terminal state $c$. State $d$ is the result of $r_1$ joining $r_2$. State $a$ is the \emph{greatest common ancestor} (\emph{gca}) of joiner state $b$ and joinee state $c$. (The initial state is regarded as the minimal element). Burckhardt and Leijen have shown that each pair of states $(x,y)$
has a unique gca: see Lemma 17 and Theorem 10 of the technical report~\cite{burckhardt2010semanticstech}.

Let $x_\gamma$ denote the store at a state $x$, and $\mathcal{W}(x,y)$ the set of locations that were written to in the execution from state $x$ to state $y$. The merge $\mathcal{M}$ of stores $b_\gamma$ (belonging to a joining revision $r_1$) and $c_\gamma$ (belonging to a joined revision $r_2$) with gca store $a_\gamma$ (see the diagram above) is defined as follows:
\[
\mathcal{M}(b_\gamma, c_\gamma) \ l = 
\begin{cases}
b_\gamma \ l &  \hspace{-5.95pt} l \notin \mathcal{W}(a, c)\\
c_\gamma \ l & \hspace{-5.95pt}l \in \mathcal{W}(a, c) \land l \notin \mathcal{W}(a, b) \\
f_l(a_\gamma \ l, b_\gamma \ l, c_\gamma \ l)& \hspace{-5.95pt} \text{otherwise}
\end{cases}
\]
Here, $f_l$ is a deterministic \emph{merge function} that resolves the \emph{write-write} conflict on $l$. It is uniquely determined by the \emph{isolation type} of $l$: a user-definable type for shared locations that describes how conflicts should be resolved.

We illustrate the concept of an isolation type using two standard examples: the \textit{Versioned} and \textit{Cumulative} isolation types.

If $l$ stores a \textit{Versioned} integer, then $f_l(v_1,v_2,v_3) = v_3$, effectively prioritizing the joinee and possibly overwriting a modification by the joiner. This behavior is illustrated by the following revision diagram:
\begin{center}
\begin{tikzcd}[row sep=0.2em, column sep=2.5em, fork/.style={dotted, bend left=18pt, opacity=0.8}, join/.style={dotted, bend left=18pt, opacity=0.8}] 
	r_2 \hspace{-6mm} &                                        &                                                      & \cdot \arrow[r, "l \ := \ 2"	]  & \cdot \arrow[rd, join]  \\
	r_1 \hspace{-6mm} & \cdot \arrow[r, "l \ := \ 3"]  & \cdot \arrow[r]  \arrow[ru, fork]  & \cdot \arrow[r, "l \ := \ 7"]  & \cdot \arrow[r]  & {\{l \mapsto 2, \ldots\} }
\end{tikzcd}
\end{center}
A datum can be declared \textit{Versioned}, for instance, when the joinee is performing some task enjoying higher priority than the joiner's task.

If $l$ stores a \textit{Cumulative} integer, by contrast, then the merge function is $f_l(v_1,v_2, v_3) = v_2 + v_3 - v_1$, taking both modifications into account. In the following diagram, both revisions added $2$ to the original value of $3$, causing the result of the merge to be $7$:
\begin{center}
\begin{tikzcd}[row sep=0.2em, column sep=2.5em, fork/.style={dotted, bend left=18pt, opacity=0.8}, join/.style={dotted, bend left=18pt, opacity=0.8}] 
  r_2 \hspace{-6mm} &  &  & \cdot \arrow[r, "l \ := \ 5"] & \cdot \arrow[rd, join]  \\
  r_1 \hspace{-6mm} & \cdot \arrow[r, "l \ := \ 3"]  & \cdot \arrow[r]  \arrow[ru, fork]  & \cdot \arrow[r, "l \ := \ 5"]  & \cdot \arrow[r]  & {\{l \mapsto 7, \ldots \} }
\end{tikzcd}
\end{center}
A typical use case for the \textit{Cumulative} isolation type is one in which $l$ functions as a counter.

Since identifiers can be exchanged through fork and join operations, valid revision diagrams can be quite complex:
\begin{center}
\begin{tikzcd}[row sep=0.2em, column sep=1em, fork/.style={dotted, bend left=35pt, opacity=0.8}, join/.style={dotted, bend left=35pt, opacity=0.8}] 
	r_3 \hspace{-2mm} &  &  & \cdot \arrow[r]  & \cdot \arrow[r]  & \cdot \arrow[rd, join]  &  &  \\
	r_4 \hspace{-2mm} &  &  &  &  & \cdot \arrow[r]  & \cdot \arrow[rdd, join]  &  \\
	r_2 \hspace{-2mm} &  & \cdot \arrow[r]  \arrow[ruu, fork]  & \cdot \arrow[rd, join]  &  &  &  &  \\
	r_1 \hspace{-2mm}  & \cdot \arrow[r]  \arrow[ru, fork]  & \cdot \arrow[r]  & \cdot \arrow[r]  & \cdot \arrow[ruu, fork]  \arrow[r]  & \cdot \arrow[r]  & \cdot \arrow[r]  & \cdot
\end{tikzcd}
\end{center}
Despite this, programs are \emph{determinate}, meaning that the outcome of a program is uniquely determined, even if scheduling is nondeterministic. This property assumes two simple conditions: (1) revisions do not perform nondeterministic behavior that affects the semantics of outcomes (e.g., generating a random number), and (2) revisions are joined only once (a second join operation would be undefined).

\subsection{Formal Semantics}
\label{sec:formal-semantics}

\begin{figure*}
	$
	\begin{array}{lllll} 
	(\textit{apply}) & s\llbracket r \mapsto \langle \sigma, \tau, \E[ (\lambda x. e) \ v ]   \rangle \rrbracket    & \to_r & s(r \mapsto \langle \sigma, \tau, \E[ [v/x] e] \rangle)  \\
	(\textit{if-true}) & s\llbracket r \mapsto \langle \sigma, \tau, \E[ \mathsf{true} \ \mathsf{?} \ e_1 \ \mathsf{:} \ e_2]\rangle \rrbracket & \to_r & s(r \mapsto \langle \sigma, \tau, \E[ e_1]\rangle) \\
	(\textit{if-false}) & s\llbracket r \mapsto \langle \sigma, \tau, \E[ \mathsf{false} \ \mathsf{?} \ e_1 \ \mathsf{:} \ e_2]\rangle \rrbracket    & \to_r & s(r \mapsto \langle \sigma, \tau, \E[ e_2]\rangle)  \\
	\\
	(\textit{new}) & s\llbracket r \mapsto \langle \sigma, \tau, \E[ \mathsf{ref} \ v]\rangle \rrbracket   & \to_r & s(r \mapsto \langle \sigma, \tau(l \mapsto v), \E[ l ] \rangle)  & \mathsf{if} \ l \notin s \\
	(\textit{get}) & s\llbracket r \mapsto \langle \sigma, \tau, \E[ \mathsf{!} l]\rangle \rrbracket   & \to_r & 
	s(r \mapsto \langle \sigma, \tau, \E[ (\sigma \mathsf{::} \tau) \ l] \rangle)  & \mathsf{if} \ l \in \dom \ (\sigma \mathsf{::} \tau)\\
	(\textit{set}) & s\llbracket r \mapsto \langle \sigma, \tau, \E[ l := v] \rangle \rrbracket & \to_r &
	s(r \mapsto \langle \sigma, \tau(l \mapsto v), \E[ \mathsf{unit}] \rangle) & \mathsf{if} \ l \in \dom \ (\sigma \mathsf{::} \tau) \\
	\\
	(\textit{fork}) & s\llbracket r \mapsto \langle \sigma, \tau, \E[ \mathsf{rfork} \ e]\rangle \rrbracket   & \to_r & 
	s(r \mapsto \langle \sigma, \tau, \E[ r'] \rangle, r' \mapsto \langle \sigma \mathsf{::} \tau, \epsilon, e \rangle)  & \mathsf{if} \ r' \notin s \\
	(\textit{join}) & s\llbracket r \mapsto \langle \sigma, \tau, \E[ \mathsf{rjoin} \ r']\rangle, r' \mapsto \langle \sigma', \tau', v \rangle \rrbracket & \to_r &
	s(r \mapsto \langle \sigma, \tau \mathsf{::} \tau', \E[ \mathsf{unit}] \rangle, r' \mapsto \bot)  \\
	(\textit{join}_\epsilon) & s\llbracket r \mapsto \langle \sigma, \tau, \E[ \mathsf{rjoin} \ r'] \rangle, r' \mapsto \bot\rrbracket  & \to_r & \epsilon  \\
	\end{array}
	$
	\caption{The rules of the operational semantics.}
	\label{fig:operational_semantics}
\end{figure*}

The CR semantics is modeled by the \emph{revision calculus}, which consists of a programming language for revisions, a set of evaluation contexts, notions of local and global states, and an operational semantics on global states. The original account also introduces an equivalence relation on states and a vocabulary for discussing execution traces.

\subparagraph{Preliminaries}
We write $\dom \ f$ and $\ran \ f$ to denote respectively the domain and range of a partial function $f$, $\epsilon$ for the empty partial function, $f \ x = \bot$ for $x \notin \dom \ f$, and $f(x \mapsto y)$ for the partial function obtained by updating $x$ to $y$ in $f$. For $n > 1$, the expression $f (x_1 \mapsto y_1, \ldots , x_{n+1} \mapsto y_{n+1})$ abbreviates $(f (x_1 \mapsto y_1, \ldots, x_n \mapsto y_n))(x_{n+1} \mapsto y_{n+1})$. For a bijective function $f$, we write $f^{-1}$ to denote its inverse. Given partial functions $f$ and $g$, $f :: g$ is a partial function that maps $x$ to $g \ x$  if $x \in \dom \ g$ and to $f \ x$ otherwise (``$g$ shadows $f$''). For functions $f$ and sets $S$, $f \ ' \ S$ denotes $S$ under the image of $f$, i.e., $\{ f \ x \mid x \in S  \}$. We write $\rightsquigarrow^=$, $\rightsquigarrow^*$ and $\rightsquigarrow^n$ for respectively the reflexive closure, reflexive transitive closure and $n$-fold composition of a relation $\rightsquigarrow$, use mirrored arrows $\leftsquigarrow$ to denote inverse relations, and write $R \circ R'$ for the composition of relations $R$ and $R'$, given by $(x,z) \in R \circ R' \iff \exists y. \ (x,y) \in R \land (y,z) \in R'$.

\subparagraph{Expressions}
The programming language is parameterized by three (typically infinite) sets: variables $x \in \textit{Var}$, revision identifiers $r \in \textit{Rid}$ and location identifiers $l \in \textit{Lid}$. It defines a set of constants $c \in \textit{Const}$, containing elements \textsf{unit}, \textsf{true} and \textsf{false}. The sets of values and expressions are mutually defined as follows:
\[
\begin{array}{lcl}
	v \in \textit{Val} & ::= & c \bnfmid x \bnfmid  l \bnfmid r \bnfmid \lambda x.e \\
e \in \textit{Expr} & ::= & v \bnfmid e \ e \bnfmid e \ \mathsf{?} \ e \ \mathsf{:} \ e \bnfmid \mathsf{ref} \ e \bnfmid  \mathsf{!}e \bnfmid \\
 & & e := e \bnfmid \mathsf{rfork} \ e  \bnfmid  \mathsf{rjoin} \ e\\
\end{array}
\] 
For the properties of interest, we do not need to consider $\lambda$-terms modulo $\alpha$-equivalence. This is fortunate, since $\alpha$-equivalence has a reputation of being challenging to formalize \cite{berghofer2007a, urban2011general}.

In some contexts, we will write $e_1 \bullet e_2$ rather than $e_1 \ e_2$ to improve readability. 

\subparagraph{Evaluation Contexts} 
The following set of evaluation contexts is defined:
\[
\begin{array}{lclcl}
	\E \in \mathit{Cntxt} & ::= & \square \bnfmid \E \ e \bnfmid v \ \E \bnfmid \E \ \mathsf{?} \ e \ \mathsf{:} \ e \bnfmid \mathsf{ref} \ \E \bnfmid \\
	& &  \mathsf{!}\E \bnfmid \E := e \bnfmid l := \E  \bnfmid \mathsf{rjoin} \ \E
	\end{array}
\]
The expression $\E[e]$ denotes the result of \emph{plugging} $e$ into the unique hole ($\square$) of $\E$. Evaluation contexts allow decomposing an expression $e = \E[r]$ into an evaluation site $r$ (a redex) and its surrounding context $\E$, enabling rewriting under contexts. A more detailed explanation of evaluation contexts is provided by Harper~\cite[pp. 44--46]{harper2016practical}.

More strongly for CR, a \emph{unique decomposition} lemma holds: $\E[r] = \E'[r']$ implies $\E = \E'$ and $r = r'$ for redexes $r$ and $r'$. Since the operational semantics matches expressions $e$ against patterns of the form $\E[r]$, the unique decomposition lemma thus guarantees that always a unique redex of $e$ is evaluated. For example, the expression $((\lambda x \ldotp x) \ x) \ ((\lambda y \ldotp y) \ y)$ can match against the pattern $\E[(\lambda x \ldotp x) \ x]$, since $\square \ ((\lambda y \ldotp y) \ y)$ is a valid context. It cannot match against $\E[(\lambda y \ldotp y) \ y ]$, however, since $((\lambda x \ldotp x) \ x) \ \square$ is not a valid context.

Uniqueness of decomposition is claimed, but not demonstrated in the original account. We describe its proof in Section~\ref{sec:formalization-preliminaries}.

\subparagraph{State}
Three notions of state are required: the state of a store, the local state of a revision, and the global state. A \textit{Store} is a partial function $\sigma,\tau \in \mathit{Lid} \rightharpoonup \mathit{Val}$, and a \textit{GlobalState} is a partial function $s \in \mathit{Rid} \rightharpoonup \mathit{LocalState}$.

For technical reasons, the local state of a revision is not a tuple $\langle \gamma, e \rangle$, consisting of a store $\gamma$ and expression $e$, as  informally  described in Section~\ref{sec:overview}. Instead, a local state is a triple $L \in \textit{LocalState} = \textit{Snapshot} \times \textit{LocalStore} \times \textit{Expr}$, where $\textit{Snapshot}$ and $\textit{LocalStore}$ are type synonyms for $\textit{Store}$. To understand why, we note that the gca store, required to define the merge operation, always equals the snapshot (initial store) of the joinee. The diagrams of Section~\ref{sec:overview} provide examples, and its proof is given in the original account (Lemma 18 of the technical report \cite{burckhardt2010semanticstech}). Thus, if a revision~$r'$ preserves the snapshot it inherits from its forker~$r$, while tracking its own updates separately, then the gca store can always be obtained from the local state of $r'$ when $r'$ is joined. In the operational semantics, snapshots are never modified and local stores track updates.

We introduce the notations $L_\sigma$, $L_\tau$ and $L_e$ for respectively the first, second and third component of a local state $L$, and define $\doms \ L = \dom \ L_\sigma \cup \dom \ L_\tau$.

\subparagraph{Occurrences} To avoid ambiguities in our discussion of the operational semantics, we introduce a family of functions not present in the original account. We write $\RID \ e$ to denote the set of all revision identifiers occurring in expression $e$, and $\LID \ e$ to denote the set of all location identifiers occurring in~$e$. We analogously define functions $\RID$ and $\LID$ for contexts. For stores $\sigma$, we define $\RID \ \sigma = \bigcup \RID \ ' \ \ran \ \sigma$ and $\LID \ \sigma = \dom \ \sigma \cup \bigcup \LID \ ' \ \ran \ \sigma$. For local states $L$, we define $\RID \ L = \RID \ L_\sigma \cup \RID \ L_\tau \cup \RID \ L_e$, and similarly for $\LID \ L$. For global states $s$, we define $\RID \ s = \dom \ s\cup \bigcup \RID \ ' \ \ran \ s$ and $\LID \ s = \bigcup \LID \ ' \ \ran \ s$.

\subparagraph{Operational Semantics}
The operational semantics (Figure~\ref{fig:operational_semantics}) defines a transition relation on global states, indexed by the revision~$r$ ``performing'' the step. The left hand side of each rule is of the form $s\llbracket r \mapsto L \rrbracket$, and matches any global state~$s$ for which $s \ r = L$.

The first three rules affect only the expression local to~$r$. The original authors state that rule \apply is deterministic, but otherwise they make no explicit assumptions about the capture-avoiding substitution $[v/x]e$.

The next three rules model store interactions.
The side condition for \new, $l \notin s$, is a notational shorthand expressing	 that ``$l$ does not appear in any snapshot or local store of $s$'' \cite{burckhardt2011semantics}. We believe that
\begin{equation}\label{eq:new_wrong}
l \notin \bigcup \{\LID \ L_\sigma \cup \LID \ L_\tau \mid L \in \ran \ s \} \tag{$SC_\textit{new}$} 
\end{equation}
is the literal interpretation of this informal characterization, rather than the more conservative side condition $l \notin \LID \ s$. We examine how the choice of interpretation influences determinacy in Section \ref{sec:operational-semantics}. Note that \new is nondeterministic.

Like rule \new, rule \fork is nondeterministic: the side condition $r' \notin s$ is meant to express that $r'$ ``is not mapped by~$s$, and does not appear in any snapshot or local store of $s$''~ \cite{burckhardt2011semantics}. We believe that
\begin{equation}
\label{eq:fork_wrong}
r' \notin \dom \ s  \cup  \bigcup \{\RID \ L_\sigma \cup \RID \ L_\tau \mid L \in \ran \ s \}\tag{$SC_\textit{fork}$}
\end{equation}
is the literal interpretation of this sentence, rather than $r' \notin \RID \ s$. In Section \ref{sec:operational-semantics} we will show that \eqref{eq:fork_wrong} leads to nondeterminacy.

The join operation is modeled by rules \join and \joinepsilon.

Rule \join resolves all conflicts according to the \textit{Versioned} isolation type. The restriction to this isolation type is part of the original account, and we adopt it here in order to remain faithful. The original account argues that this rule can be generalized by using a custom merge function
\[
\mathit{merge}_l : \mathit{Val} \times \mathit{Val} \times \mathit{Val} \to \mathit{Val}
\]
defined for the values at each location $l$ of respectively the snapshot, the local store of the joiner and the local store of the joinee. Because locations are randomly allocated in the calculus, we argue that it instead may be better to modify the calculus by introducing subtypes of \textit{Val}, which then determine which merge functions are used \cite{overbeek2018thesis}. In addition, one would have to forbid the definition of merge functions whose results depend on nondeterministic aspects, such as the occurrence of particular location and revision identifiers in argument values. Failure to do so would result in nondeterminacy.

 Rule \joinepsilon ensures that the global state collapses to the empty function when an erroneous join is performed. It is needed to establish determinacy \cite{burckhardt2011semantics}.

\subparagraph{Equivalence} Since location and revision identifiers are allocated nondeterministically, an equivalence relation on structures containing identifiers is introduced. Let $\alpha \in \mathit{Rid} \to \mathit{Rid}$,  $\beta \in \mathit{Lid} \to \mathit{Lid}$ and let $S$ be some structure containing identifiers (expressions, stores, etc.). We write $\R \ \alpha \ \beta \ S$ to denote the structure that results from renaming every identifier in $S$ according to $\alpha$ and $\beta$, and $S \approx_{\alpha\beta} S'$ to express that $\alpha$ and $\beta$ are bijections and  $\R \ \alpha \ \beta \ S = S'$.  Structures $S$ and $S'$ are said to be \emph{renaming-equivalent}, denoted $S \approx S'$, if  $S \approx_{\alpha\beta} S'$ for some $\alpha$ and $\beta$.

\subparagraph{Executions}  The original account defines a \emph{program expression} as ``an expression containing no revision identifiers'', and an \emph{initial state} as a global state of the form $\epsilon(r \mapsto \langle \epsilon , \epsilon, e \rangle)$, with $e$ a program expression and $r \in Rid$.  We contend that the characterization of a program expression can be interpreted as either $\RID \ e = \varnothing$ or as $\RID \ e = \LID \ e = \varnothing$. We choose the latter interpretation, since rules \set and \get would anyway block on manually introduced location identifiers. This is because only identifiers allocated by \new can end up in a store's domain. In addition, using the former interpretation causes nondeterminacy if side condition \eqref{eq:new_wrong} is used \cite{overbeek2018thesis}.

Let $\rightarrow \ =  \bigcup \{ \to_r  \mid r \in \mathit{Rid} \}$.  An \emph{execution} is a sequence $s \to^* s'$ with $s$ an initial state. The execution is \emph{maximal} if there does not exist an $s''$ such that $s' \to s''$, and $e \downarrow s$ expresses that there exists a maximal execution for a program expression $e$ that ends in global state $s$.  Determinacy modulo~$\approx$ thus means that $e \downarrow s$ and $e \downarrow s'$ imply $s \approx s'$.  A state $s'$ is \emph{reachable} if there exists an execution $s \to^* s'$ from an initial state $s$.

We say that a property $P$ is an \emph{execution invariant} if $P \ s$ for all reachable states $s$. A property $P$ is an \emph{inductive invariant} if
\begin{itemize}
  \item $P \ s$ for all initial states $s$, and
  \item for all states $s$ and $s'$, $\ s \to s' \land P \ s \Longrightarrow P \ s'$.
\end{itemize}
 Every inductive invariant is an execution invariant, but not vice versa.
\section{Formalization Preliminaries}
\label{sec:formalization-preliminaries}

\begin{figure*}
	$
	\begin{array}{llll}
	\texttt{top\_redex}: & {\textit{redex} \ e} \Longrightarrow {e \rhd (\square, e)} \\
	\texttt{lapply}: & \lnot \> \textit{redex} \ (e_1  \ e_2) \Longrightarrow e_1 \rhd (\E, r)  \Longrightarrow {e_1  \ e_2 \rhd (\E \ e_2, r)} \\
	\texttt{rapply} : & \lnot \> \textit{redex} \ (v \ e_2) \Longrightarrow e_2 \rhd (\E, r) \Longrightarrow {v \ e_2 \rhd (v \ \E,  r)}  \\
	\texttt{ite} : & {\lnot  \> \textit{redex} \ (e_1 \ \mathsf{?} \ e_2 \ \mathsf{:} \ e_3) \Longrightarrow e_1 \rhd (\E, r)} \Longrightarrow {e_1 \ \mathsf{?} \ e_2 \ \mathsf{:} \ e_3 \rhd (\E \ \mathsf{?} \ e_2 \ \mathsf{:}  \ e_3,  r)} \\
	\texttt{ref} : & {\lnot \> \textit{redex} \ (\Ref \ e) \Longrightarrow e \rhd (\E, r)}  \Longrightarrow {\Ref \ e \rhd (\Ref \ \E, r)}  \\
	\texttt{read} : & {\lnot \> \textit{redex} \ (\Read  e) \Longrightarrow e \rhd (\E, r)} \Longrightarrow  {\Read  e \rhd (\Read  \E, r)}  \\
	\texttt{lassign} : & \lnot \> \textit{redex} \ (e_1 := e_2) \Longrightarrow \ e_1 \rhd (\E, r) \Longrightarrow {e_1 := e_2 \rhd (\E := e_2, r)} \\
	\texttt{rassign} : & \lnot \> \textit{redex} \ (l := e_2) \Longrightarrow  e_2 \rhd (\E, r) \Longrightarrow  {l := e_2 \rhd (l := \E,  r)} \\
	\texttt{rjoin} : & {\lnot \> \textit{redex} \ (\rjoin \ e) \Longrightarrow  e \rhd (\E, r)}  \Longrightarrow {\rjoin \ e \rhd (\rjoin \ \E, r) }
	\end{array}
	$
	\caption{Predicate \texttt{decompose}, which asserts how expressions can be decomposed.}
\end{figure*}

We briefly describe the formalization of all aspects of the semantics that are preliminary to the mechanization of the operational semantics. These aspects are defined in the Isabelle theories \data, \occurrences, \renaming and \substitution. Theory \data imports \texttt{Main}, meaning that it depends only on a standard assortment of Isabelle libraries.

\subparagraph{Data} Theory \data defines the inductive data types \texttt{const}, \texttt{('r,'l,'v) val}, \texttt{('r,'l,'v) expr} and \texttt{('r,'l,'v) cntxt} required for formalizing expressions (Section \ref{sec:formal-semantics}). In the latter three definitions, \texttt{'r}, \texttt{'l} and \texttt{'v} are type parameters for respectively the types of revision identifiers \textit{Rid}, location identifiers \textit{Lid} and variables \textit{Var}. The theory also defines the notions of stores and states, and some of the related notations and operations, such as projection functions for local states. In Isabelle, partial functions $\alpha \rightharpoonup \beta$ are modeled using option types, i.e., as total functions $\alpha \to \beta \ \texttt{option}$.

Theory \data also contains all definitions related to plugging and decomposing. Most notably, it contains the proof of the unique decomposition lemma (formalized as lemma \texttt{completion\_eq}) mentioned in Section \ref{sec:formal-semantics}. The proof for this lemma has the following structure. First, a particular decomposition for terms containing redexes is defined, given in Figure~2, and formalized as inductive predicate \texttt{decompose}. Intuitively, $e \rhd (\E, r)$ is meant to assert that expression $e$ decomposes into context $\E$ and redex $r$. The decomposition is shown to be valid and unique, respectively:
\begin{lemma}[\texttt{plug\_decomposition\_equivalence}]
	For redexes $r$, \hspace{0.5mm}  $e \rhd (\E,r) \iff \E[r] = e$.
	\end{lemma}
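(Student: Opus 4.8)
The plan is to prove the two directions of the biconditional separately, exploiting the asymmetry between the two sides: the left-hand side $e \rhd (\E,r)$ is an \emph{inductively} defined predicate (Figure~2), whereas the right-hand side is defined by \emph{recursion} on the structure of the context $\E$. So the forward direction calls for rule induction on the derivation of $e \rhd (\E,r)$, and the backward direction for structural induction on $\E$.

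For the forward direction ($e \rhd (\E,r) \Longrightarrow \E[r] = e$), I would proceed by rule induction on $e \rhd (\E,r)$. The base rule \texttt{top\_redex} gives $\E = \square$ and $e = r$, so $\square[r] = r = e$ holds by the defining equation of plugging. Each of the eight remaining rules is a compound context constructor, and in each case the induction hypothesis supplies $\E'[r] = e'$ for the sub-context $\E'$; the goal then follows by unfolding the recursive equation of plugging for that constructor (for \texttt{lapply}, say, $(\E' \ e_2)[r] = (\E'[r]) \ e_2 = e_1\ e_2$). Crucially, the $\lnot\,\textit{redex}(\cdots)$ premises of the rules are not needed here, since in this direction they are assumptions rather than obligations; the forward direction is therefore a routine unfolding.

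For the backward direction ($\E[r] = e \Longrightarrow e \rhd (\E,r)$), I would observe that it suffices to establish $\E[r] \rhd (\E,r)$ for every context $\E$ and redex $r$, the original statement following by rewriting $e$ as $\E[r]$. I would prove this by structural induction on $\E$. The cases of the context grammar line up one-to-one with the rules of \texttt{decompose}: $\E = \square$ is discharged by \texttt{top\_redex} using the hypothesis that $r$ is a redex, and each compound context selects its matching rule, whose decomposition premise is exactly the induction hypothesis applied to the sub-context. What remains in every inductive case is to verify the $\lnot\,\textit{redex}(\cdots)$ side condition.

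The hard part, and the single lemma that discharges all of these side conditions, is the auxiliary fact that \emph{plugging a redex into any context never yields a value}: for every context $\E$ and redex $r$, $\E[r]$ is not a value. Given this, each side condition collapses: in \texttt{ite} the compound $\E[r]\ \mathsf{?}\ e_2\ \mathsf{:}\ e_3$ can be a redex only if $\E[r]$ is \textsf{true} or \textsf{false}; in \texttt{ref} only if $\E[r]$ is a value; in \texttt{read} and \texttt{lassign} only if $\E[r]$ is a location; in \texttt{rjoin} only if $\E[r]$ is a revision identifier; and in \texttt{lapply}/\texttt{rapply} only if the plugged subexpression is a value (a $\lambda$-abstraction, or a value argument) — all of which are values and hence excluded. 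The auxiliary lemma itself is proved by case analysis on $\E$: when $\E = \square$ it reduces to the observation that no redex is a value (immediate, since every redex is headed by a non-value expression constructor), and when $\E$ is compound its root is again a non-value expression constructor. I expect the only genuinely delicate point to be formulating this auxiliary lemma and carrying out the constructor-distinctness reasoning cleanly; once it is in place, both inductions are mechanical.
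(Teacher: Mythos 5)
Your proposal matches the paper's proof exactly: the forward direction by rule induction on $e \rhd (\E,r)$, the backward direction by structural induction on $\E$, and the side conditions discharged by the auxiliary fact that a plugged redex is never a value, which is precisely the paper's supporting lemma \texttt{plugged\_redex\_not\_val}. No gaps; this is the same argument.
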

\begin{proof}
	Direction $\Longrightarrow$ follows by rule induction on $e \rhd (\E, r)$. Direction $\Longleftarrow$ is shown by structural induction on $\E$.
\end{proof}
\begin{lemma}[\texttt{unique\_decomposition}]
	If $e \rhd (\E_1, r_1)$ and $e \rhd (\E_2, r_2)$, then $\E_1 = \E_2$ and $r_1 = r_2$.
	\end{lemma}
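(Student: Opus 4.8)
The plan is to prove uniqueness by rule induction on the first derivation $e \rhd (\E_1, r_1)$, performing an inversion on the second derivation $e \rhd (\E_2, r_2)$ within each case. The guiding observation is that the rules of \texttt{decompose} in Figure~2 are essentially syntax-directed: the top-level constructor of $e$, together with the $\textit{redex}$ side conditions, almost completely pins down which rule must have fired. The base case \texttt{top\_redex} and the single-constructor inductive cases (\texttt{ite}, \texttt{ref}, \texttt{read}, \texttt{rjoin}) are then immediate, and the only real work lies in the two constructors that are matched by \emph{two} rules.

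Before the induction I would establish the auxiliary fact that no value is decomposable, i.e.\ $v \rhd (\E, r)$ holds for no value $v$. This follows by inspecting the conclusions of the rules: each left-hand side is either a redex or a compound expression ($e_1\,e_2$, an \texttt{ite}, $\Ref\,e$, $\Read\,e$, an assignment, or $\rjoin\,e$), and none of these forms is a value (values being $c$, $x$, $l$, $r$, and $\lambda x.e$, none of which is a redex). This small lemma is exactly what separates the two pairs of overlapping rules.

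Turning to the cases: for \texttt{top\_redex}, $e$ is a redex and $\E_1 = \square$, $r_1 = e$; inverting the second derivation, every structural rule is blocked by its $\lnot\,\textit{redex}$ guard, so only \texttt{top\_redex} can apply and forces $\E_2 = \square = \E_1$ and $r_2 = e = r_1$. In each single-constructor inductive case, inversion of the second derivation excludes \texttt{top\_redex} (via the $\lnot\,\textit{redex}$ guard) and every rule of a different constructor (by the form of $e$), leaving the same rule; the induction hypothesis applied to the unique subderivation then yields equality of the inner context and redex, hence $\E_1 = \E_2$ and $r_1 = r_2$. The interesting cases are \texttt{lapply}/\texttt{rapply} on $e_1\,e_2$ and \texttt{lassign}/\texttt{rassign} on $e_1 := e_2$: here \texttt{rapply} and \texttt{rassign} require the left subexpression to be a value (a value $v$, respectively a location $l$), whereas \texttt{lapply} and \texttt{lassign} require it to be decomposable. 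Since values are not decomposable, at most one of each pair can apply, and the induction hypothesis closes the case.

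The main obstacle is precisely this disambiguation of the overlapping rules for applications and assignments; everything else is routine bookkeeping driven by the top-level form of $e$ and the $\textit{redex}$ guards. In Isabelle I expect the whole argument to be discharged largely automatically once the ``no value is decomposable'' lemma is in hand, by feeding it to the inversion/elimination tactics so that the spurious \texttt{lapply}/\texttt{rapply} and \texttt{lassign}/\texttt{rassign} combinations are pruned before the induction hypotheses are applied.
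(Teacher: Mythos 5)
Your proposal is correct and matches the paper's proof: the paper likewise proceeds by rule induction on $e \rhd (\E_1, r_1)$, discharging all cases (in Isabelle, via \textit{auto} with the generated introduction/elimination rules, i.e.\ inversion of the second derivation) using the supporting lemma \texttt{plugged\_redex\_not\_val}, which states $\E[r] \notin \textit{Val}$ for redexes $r$. Your auxiliary fact that no value is decomposable is just the decompose-side phrasing of that same lemma (the two are interchangeable via \texttt{plug\_decomposition\_equivalence}), so the approach is essentially identical.
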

\begin{proof}
	By rule induction on $e \rhd (\E_1, r_1)$.
\end{proof}

Proofs of unique decomposition lemmas have a reputation of being tediously routine and error-prone \cite{xiao2001from}. This is also our experience, and we think the many inductive cases provide some indication for that. Isabelle's \textit{auto} proof method, however, is able to solve all of these cases automatically once configured with the supporting lemma below and (automatically generated) introduction and elimination  rules for \texttt{decompose}.
\begin{lemma}[\texttt{plugged\_redex\_not\_val}]
  If $r$ is a redex, then $\E[r] \notin \textit{Val}$.
\end{lemma}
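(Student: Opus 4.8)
The plan is to prove the statement by a single case distinction on the outermost shape of the evaluation context $\E$. No induction on $\E$ is needed: whether $\E[r]$ is a value is determined entirely by the head constructor introduced at the top of $\E$, so the inner structure of any nested subcontext is irrelevant. The argument therefore splits into the hole case and the eight non-hole cases, and only the hole case will actually use the hypothesis that $r$ is a redex.

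First I would dispatch every non-hole context. For each of the forms $\E' \ e$, $v \ \E'$, $\E' \ \mathsf{?} \ e_1 \ \mathsf{:} \ e_2$, $\Ref \ \E'$, $\Read \E'$, $\E' := e$, $l := \E'$ and $\rjoin \ \E'$, plugging $r$ produces an expression whose head constructor is, respectively, application, application, conditional, $\Ref$, $\Read$, assignment, assignment, or $\rjoin$. Since $\textit{Val}$ is generated only by the constructors for constants, variables, location identifiers, revision identifiers and $\lambda$-abstraction, none of these head constructors can yield a value. Hence $\E[r] \notin \textit{Val}$ for every non-hole $\E$, irrespective of what is plugged in.

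It then remains to treat the hole case $\E = \square$, where $\E[r] = r$ and the claim reduces to showing that no redex is a value. Here I would perform a secondary case analysis on the form of the redex $r$, following the left-hand sides of the rules in Figure~\ref{fig:operational_semantics}: $(\lambda x. e) \ v$, $\mathsf{true} \ \mathsf{?} \ e_1 \ \mathsf{:} \ e_2$ (and its $\mathsf{false}$ counterpart), $\Ref \ v$, $\Read l$, $l := v$, $\rfork \ e$ and $\rjoin \ r'$. Each is headed by a non-value constructor, so none lies in $\textit{Val}$. The only mildly delicate point is the \apply redex: one must observe that although $\lambda x. e$ is itself a value, the application $(\lambda x. e) \ v$ is headed by the application constructor and is therefore not a value.

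I do not expect any genuine obstacle. The disjointness of the value constructors from the remaining expression constructors is immediate from the mutually inductive datatype definitions, so in Isabelle each subgoal should be discharged by simplification once the redex predicate and the plugging operation are unfolded. The only thing requiring care is exhaustiveness — ensuring that every context form and every redex form is covered — which is exactly the routine-but-error-prone bookkeeping that the proof assistant is well suited to verify.
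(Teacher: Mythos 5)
Your proof is correct: the case distinction on the outermost form of $\E$ (hole versus the eight non-hole productions), with constructor disjointness handling the non-hole cases and a secondary case analysis on redex shapes handling $\E = \square$, establishes the claim, and your observation that no induction hypothesis is ever needed is accurate. The paper states this lemma without giving a proof (it is a supporting fact fed to \textit{auto} in the formalization), and your argument is essentially the one the formalization must perform, so there is no substantive divergence to report.
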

 
\subparagraph{Occurrences}
Theory \occurrences defines the $\RID$ and $\LID$ definitions for stores, local states and global states. (The $\RID$ and $\LID$ definitions for values, expressions and contexts are automatically introduced with the data type declarations in \data.)

The theory also proves lemmas that are useful for reasoning about occurrences of location and revision identifiers. For instance, suppose we wish to prove $\RID \ v \subseteq \RID \ s(r \mapsto \langle \sigma, \tau(l \mapsto v), \E[e] \rangle)$. Ideally, we would like to automate the proofs to such obvious lemmas  as much as possible. To this end, we prove a number of simplification rules that flatten complex expressions such as
$\RID \ s(r \mapsto \langle \sigma, \tau(l \mapsto v), \E[e] \rangle)$ into simpler ones such as
\[
\begin{array}{lll}
\RID \ s(r \mapsto \bot) \cup  \{r\} \cup \RID \ \sigma \cup \RID \ \tau(l \mapsto \bot) \  \cup \\
\RID \ v \cup \RID \ \E  \cup \RID \ e\text{,}
\end{array}
\]
since Isabelle's automation tools can easily reason about sets. Similarly, we declare a number of introduction and elimination rules for expressions that cannot be flattened. An example is the introduction rule
\[ 
r \in \RID \ (\sigma :: \tau) \Longrightarrow r \notin \RID \ \sigma \Longrightarrow r \in \RID \ \tau \]
named \texttt{ID\_combination\_subset\_union(1)} in the Isabelle formalization.

\subparagraph{Renaming} Theory \renaming contains all of the definitions and laws related to renaming. Like the $\RID$ and $\LID$ functions, the various renaming functions are discriminated using subscripts in Isabelle, which we omit in this paper.

For values $v$, the renaming $\R \ \alpha \ \beta \ v$ is defined as an abbreviation for $\texttt{map\_val} \ \alpha \ \beta \ \id \ v$, where  \texttt{map\_val} is a function automatically generated by the data type declaration of \texttt{val}. Here, $\texttt{map\_val} \ \alpha \ \beta \ \id \ v$ is the value obtained by renaming location identifiers, revision identifiers and variables according to $\alpha$, $\beta$ and the identity function, respectively. Abbreviations are analogously defined for the renaming of expressions and contexts. The renaming of a store $\sigma$, $\R \ \alpha \ \beta \ \sigma$, is formalized as the function
\[
(\R \ \alpha \ \beta \ \sigma) \ l = \sigma \ (\beta^{-1} \ l) \ \text{>>=} \ (\lambda v \ldotp \R \ \alpha \ \beta \ v)\text{,}
\]
where \text{>>=} is the \emph{bind operator} satisfying $(\texttt{None} \ \text{>>=} \ f) = \texttt{None}$ and $(\texttt{Some} \ x  \ \text{>>=} \ f) = \texttt{Some} \ (f \ x)$ for option types. We show that the renaming is well defined for bijections $\beta$ (lemma \texttt{$\R_S$\_implements\_renaming}). The renaming of a global state is defined in a similar fashion, and the renaming of a local state is straightforwardly defined as a renaming of its components.

The relation $\approx$ is defined and established to be an equivalence (lemmas $\alpha\beta\texttt{\_refl}$, $\alpha\beta\texttt{\_sym}$ and $\alpha\beta\texttt{\_trans}$). This requires proving several identity, composition and inverse laws for each of the renaming functions.

We prove several distributive laws that serve as simplification rules for renamings. For instance, the term $\R \ \alpha \ \beta \ (s(r \mapsto \langle \sigma, \tau(l \mapsto v), \E[e] \rangle))$ is configured to simplify to
\[
\begin{array}{lll}
\R \ \alpha \ \beta \  s(\alpha \ r \mapsto \langle \R \ \alpha \ \beta \ \sigma, \R \ \alpha \ \beta \ \tau(\beta \ l \mapsto \R \ \alpha \ \beta \ v), \\
 \hspace{23.2mm} \R \ \alpha \ \beta \ \E[ \R \ \alpha \ \beta \ e] \rangle)\text{.}
 \end{array}
\]

We distinguish a special class of bijective renamings of the form $\mathit{id}(x := y, y := x)$ that we call \emph{swaps}. All renamings used in proofs are swaps. Several rules are proven that help eliminate ``redundant'' swaps. An example of such a rule states that if $l \notin \LID \ v$ and $l' \notin \LID \ v$, then $\R \ \mathit{id} \ \mathit{id}(l := l', l' := l) \ v = v$ (lemma \texttt{eliminate\_swap\_val(2)}). The swap rules are declared as both simplification and introduction rules.

\subparagraph{Substitution} As observed in Section~\ref{sec:formal-semantics}, rule \apply presupposes a notion of substitution, but the original account does not specify which one. For this reason, we also do not fix a particular notion of substitution. Instead, theory \substitution defines a locale called \texttt{substitution}. The locale fixes a constant \texttt{subst}, and introduces three assumptions:
\begin{enumerate}
	\item \texttt{renaming\_distr\_subst}: \\ $\R \ \alpha \ \beta \ (\texttt{subst} \ e \ x \ e') =  \texttt{subst} \ (\R \ \alpha \ \beta \ e) \ x \ (\R \ \alpha \ \beta \ e')$;
	\item \texttt{subst\_introduces\_no\_rids}: \\ $\RID \ (\texttt{subst} \ e \ x \ e')  \subseteq \RID \ e \cup \RID \ e'$; and
	\item \texttt{subst\_introduces\_no\_lids}: \\ $\LID \ (\texttt{subst} \ e \ x \ e')  \subseteq \LID \ e \cup \LID \ e'$.
\end{enumerate}
We found that these assumptions were sufficient for proving determinacy.

We provide two models for \texttt{substitution} that demonstrate that the assumptions are satisfiable. The first is a trivial model, in which \texttt{subst} is interpreted as a constant function that maps to $\unit$:
$\texttt{constant\_function} \ e \ x \ e' = \mathsf{unit}$.
The fact that this constant function is a model (proven in lemma \texttt{constant\_function\_models\_substitution}) indicates that the assumptions on \texttt{subst} are weak.

The second model, function \texttt{nat\_subst$_\texttt{E}$}, is a more faithful instance of a deterministic substitution function in which natural numbers are used as variables. It is mutually recursively defined with \texttt{nat\_subst$_\texttt{V}$}, which implements substitution for values. Let $\V \ e$ denote the set of (free and bound) variables that occur in the expression $e$, and let $e_{x \mapsto y}$ denote the expression obtained by renaming \emph{every} variable $x$ in $e$ to $y$. The following case of the definition illustrates how deterministic capture-avoiding substitution is implemented:
\[
\begin{array}{lll}
\texttt{nat\_subst}_\texttt{V} \ e \ x \ (\lambda y\ldotp e') = \\
\hspace{10mm}
\begin{cases}
 \lambda y\ldotp e' & \text{if $x = y$}\\
\lambda z \ldotp \texttt{nat\_subst$_\texttt{E}$} \ e \ x \ e'_{y \mapsto z} & \text{otherwise}
\end{cases}
\end{array}
\]
where $z = \textit{max}(\V \ e \ \cup \ \V \ e') + 1$. For further technical details, such as why bound variables are also renamed, we refer to the author's master's thesis~\cite{overbeek2018thesis}.

\section{Operational Semantics}
\label{sec:operational-semantics}

We are now ready to formalize the operational semantics. Recall from Section \ref{sec:formal-semantics} that we have to choose between the \fork side conditions \eqref{eq:fork_wrong} and $r \notin \RID \ s$, and between the \new side conditions \eqref{eq:new_wrong} and $l \notin \LID \ s$. In this section, we first show that \eqref{eq:fork_wrong} is too weak, since it leads to an indeterminate calculus (Section \ref{sec:fork_side}). We then argue that the side conditions \eqref{eq:new_wrong} and $l \notin \LID \ s$ are equivalent, and that an even weaker formulation of this side condition is possible. The core of the argument is in Section \ref{sec:new_side}, in which we also describe its formalization in \operationalsemantics. The argument is concluded in Section \ref{sec:executions}, in which we describe \executions, the formalization of executions.

\subsection{Side Condition for Rule \fork}\label{sec:fork_side}

Can a revision identifier $r$ be safely allocated if one uses side condition \eqref{eq:fork_wrong}? The answer is no: this would result in indeterminacy, irrespective of the side condition on rule~\new.

What follows is a counterexample to determinacy. As a visual aid, we underline redexes $r$ of expressions $e = \E[r]$. Define the program expression
\[
P =  \big( \lambda x \ldotp \rfork \ (\rjoin \ x) \bullet ( \rjoin \ x \bullet \rfork \ \unit ) \big)  \bullet \underline{\rfork \ \unit}
\]
and consider an initial state $\{ r_1 \mapsto \langle \epsilon, \epsilon, P \rangle \}$. In what follows, we will omit the stores, because they will remain empty.
\newcommand{\negspace}{\hspace{-15mm}}
Consider the following execution trace:
\small
\[
\begin{array}{lllll}
 & \hspace{-2mm}  \{ r_1 \mapsto P \}  \\
\to_{r_1} & \hspace{-2mm} \{ r_1 \mapsto \underline{\big( \lambda x \ldotp \rfork \ (\rjoin \ x) \bullet ( \rjoin \ x \bullet \rfork \ \unit )  \big) \bullet r_2}, \\
 & \hspace{-2mm} \ r_2 \mapsto \unit \} \\
 \to_{r_1} & \hspace{-2mm} \{ r_1  \mapsto  \underline{\rfork  \ (\rjoin \ r_2)} \bullet ( \rjoin \ r_2 \bullet \rfork \ \unit ), \\
 & \hspace{-2mm} \ r_2 \mapsto \unit \} \\
\to_{r_1} & \hspace{-2mm} \{ r_1  \mapsto r_3 \bullet (\underline{\rjoin \ r_2} \bullet \rfork \ \unit ), r_2  \mapsto \unit, \\ 
& \hspace{-2mm} \ r_3  \mapsto  \underline{\rjoin \ r_2} \} \\ 
 \to_{r_1}  & \hspace{-2mm} \{ r_1  \mapsto r_3 \bullet ( \unit  \bullet  \underline{\rfork \ \unit} ),  r_3  \mapsto \underline{\rjoin \ r_2} \} \\ 
\to_{r_1} & \hspace{-2mm} \{ r_1  \mapsto r_3 \bullet ( \unit  \bullet  r_4 ),  r_3  \mapsto \underline{\rjoin \ r_2}, r_4 \mapsto \unit \}
\end{array}
\]
\normalsize
By \eqref{eq:fork_wrong}, $r_1$, $r_2$ and $r_3$ are pairwise distinct, and so are $r_1$, $r_3$ and $r_4$. But $r_2$ and $r_4$ may be equal, since $r_2$ occurred only in an expression when $r_4$ was forked. If $r_2 = r_4$, then $r_3$ performs a \join step resulting in the terminal global state
$
s = \{
r_1 \mapsto r_3 \bullet ( \unit \bullet r_4), \ 
r_3 \mapsto \unit
\}
$.
If $r_2 \neq r_4$, however, $r_3$ performs a \joinepsilon step, collapsing the global state to $\epsilon$ $\not\approx s$.

Thus, the revision calculus is nondeterminate if \eqref{eq:fork_wrong} is used. Using the side condition $r \notin \RID \ s$ invalidates the counterexample, and we will see in Section \ref{sec:determinacy} that it suffices for establishing determinacy.

The proof that \eqref{eq:fork_wrong} does not suffice as the side condition for \fork is the only proof not part of the Isabelle formalization. To formalize it, a number of operational assumptions on \texttt{subst} are needed that allow it to distribute over the constructor symbols in the second reduction step.

\subsection{Side Condition for Rule \new}\label{sec:new_side}

Can a location identifier $l$ be safely allocated if  one uses side condition \eqref{eq:new_wrong}? The answer is yes. In fact, the side conditions
\[
\label{eq:new_side2}
l \notin \bigcup \{ \doms \ L \mid L \in \ran \ s \} \tag{$SC'_\textit{new}$}
\]
\eqref{eq:new_wrong} and $l \notin \LID \ s$ all turn out to be equivalent. This is because $\LID \ L = \doms \ L$ for every $L \in \ran \ s$ is an execution invariant. This finding also implies that the side conditions for \get and \set are redundant.

To prove our finding, our first step is to formalize the operational semantics assuming the conservative formulation $l \notin \LID \ s$. Its formalization is the inductive relation \texttt{revision\_step} in theory \operationalsemantics. The notation $s \to_r s'$ henceforth corresponds to \texttt{revision\_step r s s'}.

We introduce the following definition (formalized by the two Isabelle definitions \texttt{domains\_subsume} and \texttt{domains\_\allowbreak subsume\_\allowbreak globally}):
\begin{definition}[Subsumption]
	The domains of a local state $L$ \emph{subsume} its location identifiers, denoted $\subsume \  L$, when $\LID \ L \subseteq \doms \ L$. We write $\subsume_G \ s$ for a global state $s$ when $\subsume \ L$ for all local states $L \in \ran \ s$.
\end{definition}
Our claim is thus that $\subsume_G$ is an execution invariant for global states $s$. (The direction $\doms \ L \subseteq \LID \ L$ is trivial.) We prove this by means of an inductive invariant. $\subsume_G$ is not an inductive invariant itself. The reason is rule (\textit{join}): 
\[
\begin{array}{lll}
s\llbracket r \mapsto  \langle \sigma, \tau, \E[\mathsf{rjoin} \ r'] \rangle, r' \mapsto \langle \sigma', \tau', v \rangle \rrbracket \to_r   \\
s(r \mapsto \langle \sigma, \tau \mathsf{::} \tau', \E[\mathsf{unit}]\rangle, r' \mapsto \bot)
\end{array}
\]
The two  inductive assumptions
$
\subsume \ \langle \sigma, \tau, \E[\mathsf{rjoin} \ r'] \rangle
$
and
$
\subsume \  \langle \sigma', \tau', v \rangle 
$
are not strong enough to prove the obligation
$
\subsume \ \langle \sigma, \tau \mathsf{::} \tau', \E[\mathsf{unit}]\rangle
$.
Namely, the case in which $\tau'$ maps to a value containing some $l \in \textit{Lid}$ that is subsumed \emph{only} by $\dom \ \sigma'$ cannot be proven.

To take care of rule \join, the following property is needed as well (formalized by definitions \texttt{subsumes\_accessible} and \texttt{subsumes\_accessible\_globally}):
\begin{definition}
	Let $s$ be a global state with $r, r' \in \dom \ s$. We write $\subsnap \ r \ r' \ s$ if $r' \in \RID \ (s \ r)$ implies $ \LID \ (s \ r')_\sigma \subseteq \doms \ (s \ r)$.
	If $\subsnap \ r \ r' \ s$ for all $r, r' \in \dom \ s$, then we write $\subsnap_G \ s$.
\end{definition}

We show that $\subsume_G \land \subsnap_G$ is preserved under $\to$ steps. We do not yet show that it is an inductive invariant, since that requires the formalization of notions related to executions, such as the definition of an initial state. Since the proof is a contribution of this paper, we provide a proof sketch that also serves as a high-level overview for the proof in the Isabelle formalization. 

\begin{lemma}[\texttt{step\_preserves\_$\subsume_G$\_and\_$\subsnap_G$}]
	\label{lem:inductive_inv}
	Assume that $s \to_r s'$, $\subsume_G \ s$ and $\subsnap_G \ s$. Then $\subsume_G \ s'$ and $\subsnap_G \ s'$.
\end{lemma}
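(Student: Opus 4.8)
The plan is to prove the lemma by a case distinction on the rule of the operational semantics justifying $s \to_r s'$, discharging both conjuncts $\subsume_G\ s'$ and $\subsnap_G\ s'$ in each of the nine cases. Two facts recur throughout. First, the reducts introduce no identifiers beyond those already present: for rule \apply this is exactly assumptions \texttt{subst\_introduces\_no\_lids} and \texttt{subst\_introduces\_no\_rids}, and for the remaining rules it follows from how $\RID$ and $\LID$ distribute over plugging. Second, the elementary inclusion $\dom\ \sigma \subseteq \LID\ \sigma$ holds for every store $\sigma$, since $\LID\ \sigma = \dom\ \sigma \cup \bigcup \LID\ '\ \ran\ \sigma$.

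The three purely local rules \apply, \ifTrue and \ifFalse change only $L_e$ and leave every snapshot, local store, and hence every $\doms$, untouched; as they can only shrink the occurring identifiers, both invariants transfer directly from $s$. The store rules \new, \get and \set are almost as easy: for \get and \set the value read from or moved into the store already had its identifiers accounted for in $\sigma$ or $\tau$, so the snapshots, the per-state $\RID$ and (using the side condition $l \in \dom\ (\sigma :: \tau)$ for \set) the $\doms$ are unchanged; for \new the domain grows exactly by the fresh $l$ that is added to the expression, so the $\subsume$ inclusion is preserved, and because snapshots and the stepping state's $\RID$ are unchanged $\subsnap_G$ survives as well. Rule \joinepsilon collapses the state to $\epsilon$, for which both invariants hold vacuously.

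The \fork case is the first that genuinely consumes the hypotheses. A fresh $r'$ with $r' \notin \RID\ s$ is added with snapshot $\sigma :: \tau$, empty local store and the forked body, and $r'$ now occurs in the forker's expression. Establishing $\subsume$ of the child uses $\LID(\sigma :: \tau) \subseteq \LID\ \sigma \cup \LID\ \tau$ together with $\subsume$ of the forker. For $\subsnap_G$ the genuinely new obligations are the pair $(r, r')$---whose antecedent now holds and which I discharge from $\subsume$ of the forker, giving $\LID(\sigma :: \tau) \subseteq \dom\ \sigma \cup \dom\ \tau = \doms\ (s'\ r)$---and the pairs $(r', r_2)$, for which $\RID\ (s'\ r') \subseteq \RID\ (s\ r)$ reduces the goal to $\subsnap\ r\ r_2\ s$. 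Every other pair either avoids $r'$ and follows from $\subsnap_G\ s$, or places $r'$ in joinee position with $r_1 \neq r$ and so has a false antecedent by freshness.

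The \join case is the crux and the main obstacle, exactly as the informal remark preceding the lemma anticipates. The joiner's store becomes $\tau :: \tau'$, so proving $\subsume\ \langle \sigma, \tau :: \tau', \E[\unit]\rangle$ reduces to $\LID\ \tau' \subseteq \dom\ \sigma \cup \dom\ \tau \cup \dom\ \tau'$, and the troublesome contribution is the part of $\LID\ \tau'$ that $\subsume$ of the joinee bounds only by $\dom\ \sigma'$. This is where $\subsnap$ earns its keep: since $r'$ occurs in the joiner's expression, $\subsnap\ r\ r'\ s$ yields $\LID\ \sigma' \subseteq \dom\ \sigma \cup \dom\ \tau$, and combined with $\dom\ \sigma' \subseteq \LID\ \sigma'$ this absorbs the offending $\dom\ \sigma'$. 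The same ingredients settle $\subsnap_G\ s'$: for a surviving $r_2$ occurring in the merged joiner, either $r_2$ already occurred in $\sigma$, $\tau$ or $\E$ and $\subsnap\ r\ r_2\ s$ applies, or $r_2$ was inherited from $\tau' \subseteq \RID\ (s\ r')$, in which case $\subsnap\ r'\ r_2\ s$ bounds $\LID\ (s\ r_2)_\sigma$ by $\doms\ (s\ r')$, which $\subsnap\ r\ r'\ s$ in turn shows is contained in the new joiner domains. I expect the bookkeeping of these transitive uses of $\subsnap$---compounded by the fact that the joiner's $\RID$ simultaneously gains identifiers from $\tau'$ and loses $r'$---to be the delicate part of the formalization.
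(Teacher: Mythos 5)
Your proposal is correct and takes essentially the same route as the paper's proof: a case analysis on the nine rules in which \join is the crux, with $\subsnap \ r \ r' \ s$ used to absorb the joinee's $\dom \ \sigma'$ into the joiner's domains, and a case distinction on whether an identifier is inherited from $\tau'$ (versus from $\sigma$, $\tau$ or $\E$) used to preserve $\subsnap_G$. The only difference is organizational---the paper discharges $\subsume_G \ s'$ for all rules first and then $\subsnap_G \ s'$ (after reducing to updated, distinct pairs), whereas you discharge both conjuncts rule by rule.
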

\begin{proof}
	
We first establish $\subsume_G \ s'$ by a case distinction on the step $s \to_r s'$. It suffices to show $\subsume \ (s' \ r'')$ for indices $r''$ that have been updated, i.e., for which\ $s \ r'' \neq s' \ r''$. Cases \apply, \ifTrue, \ifFalse, \new, \get and \set modify only revision $r$, and case \fork in addition modifies revision $r'$. In each case, the goal is shown using calculational reasoning, requiring only the assumption $\subsume \ (s \ r)$. The proof for case \join is proven similarly, but in addition requires the assumption $\subsnap_G \ s$. Case \joinepsilon is vacuous since $s' =\epsilon$.

To show $\subsnap_G \ s'$, we make two  observations. First, $\subsnap \ r \ r \ s'$ for all $r \in \dom \ s'$ follows from $\subsume_G \ s'$ (encoded by lemma \texttt{$\subsume_G$\_imp\_$\subsnap$\_refl}). Second, if $s' \ r_1 \ {=} \ s \ r_1$ and $s' \ r_2 \ {=} \ s \ r_2$, then $\subsnap \ r_1 \ r_2 \ s'$ follows directly from $\subsnap \ r_1 \ r_2 \ s$. Hence, it suffices to show that $\subsnap \ r_1 \ r_2 \ s'$ for all \emph{distinct} $r_1, r_2 \in \dom \ s'$ with $s' \ r_1 \ {\neq} \  s \ r_1$ or $s' \ r_2 \ {\neq} \ s \ r_2$. We again proceed by case analysis on the step $s \to_r s'$:
	\begin{itemize}
		\item For each of the six local rules that modify \emph{only} the revision $r$, one must show $\subsnap \ r \ r'' \ s'$ and $\subsnap \ r'' \ r \ s'$ for arbitrary $r'' \in \dom \ s'$ with $r'' \neq r$. The reasoning in each of these six cases is very similar.
		\item Case \join is like the above, except that a case distinction on $r'' \in \RID \ \tau'$ is required for showing $\subsnap \ r \ r'' \ s'$.
		\item Case (\textit{fork}) creates two new local states at $r$ and $r'$. This creates a proof obligation for six properties, namely, $\subsnap \ r_1 \ r_2 \ s'$ for distinct $r_1, r_2 \in \{ r, r', r'' \}$, where $r''$ is some arbitrary unchanged revision.
		\item Case \joinepsilon, finally, again holds vacuously.
		\qedhere
\end{itemize}
\end{proof}

Theory \operationalsemantics ends with the definition of \texttt{revision\_step\_relaxed}. This inductive relation is identical to \texttt{revision\_step}, except that the side condition for \new is \eqref{eq:new_side2}, and the side conditions for \get and \set are omitted. Here, we will write $s \to_r' s'$ for the relation \texttt{revision\_step\_relaxed r s s'}. The proof that $\to_r$ and $\to_r'$ characterize the same transition system (given the definition of an initial state) is formalized in \executions.

\subsection{Executions}
\label{sec:executions}

Theory \executions formalizes all of the notions related to executions, described in Section \ref{sec:formal-semantics}. The set \texttt{steps} encodes the abstracted relation $\to$. To avoid confusion with the HOL symbol for logical implication, we write $s \leadsto s'$ for $s \to s'$ in the Isabelle formalization. The closure operations are defined using definitions from the Isabelle library \texttt{Transitive\_Closure}, which also liberates us from having to prove many standard (but indispensible) closure laws, such as $(x,y) \in R^* \iff \exists n. \ (x,y) \in R^n$ and $R^* \circ R^* = R^*$.

The theory proves that  every inductive invariant is an execution invariant (Isabelle lemma \texttt{inductive\allowbreak\_invariant\_\allowbreak is\_\allowbreak execution\_invariant}), and that the property
\[
\lambda s \ldotp \ \subsume_G \ s  \land \subsnap_G \ s
\]
is an inductive invariant (\texttt{nice\_ind\_inv\_is\_\allowbreak inductive\_\allowbreak invariant}). This lemma is used to prove that $(s \to_r s') = (s \to_r' s')$ for reachable states $s$ (\texttt{transition\_\allowbreak relations\_\allowbreak equivalent}), concluding the argument started in Section~\ref{sec:new_side}.
 
In addition, inductive invariance is used to show that reachability of $s$ implies that the sets $\RID \ s$ and $\LID \ s$ are finite (lemma \texttt{reachable\_imp\_identifiers\_finite}). Its proof requires similar lemmas for all the remaining structures. The result implies that a fresh identifier can always be allocated, on the assumption that \textit{Lid} and \textit{Rid} are infinite sets (lemma \texttt{reachable\_imp\_identifiers\_\allowbreak available}). While it is understandably not mentioned in the original account, it is required for formally establishing determinacy.
 
The theory ends with a proof that reachability is closed under execution, i.e., that $s \to s'$  and reachability of $s$ imply that $s'$ is reachable (\texttt{reachability\_closed\_under\_\allowbreak execution}). This lemma is a technicality required in the proof of determinacy.
  
\section{Determinacy}
\label{sec:determinacy}

Our proof of determinacy deviates from the one found in the original account. In this section we first explain and motivate the high-level differences (Section~\ref{sec:comparison}). We then explain how our proof is formalized in theory \determinacy (Section~\ref{sec:formalization}). 

\subsection{Comparison}
\label{sec:comparison}

The original proof establishes determinacy through a sequence of linearly dependent claims:
\begin{enumerate}
\item \emph{Local determinism} is established: if $s_2 \leftto_r s_1 \approx_{\alpha\beta} s_1' \to_{\alpha \ r} s_2'$, then $s_2 \approx s_2'$.\footnote{%
	Where applicable, we make the formulations in the original account formally precise. In this case, the assumption was written as $s_2 \leftto_r s_1 \approx s_1' \to_r s_2'$, which is slightly incorrect: the relation between revision $r$ in $s_1$ and revision $r$ in $s_2$ can be arbitrary.} The proof relies on the statement that ``for a fixed revision $r$, [an expression context $\E[e]$] is matched uniquely by at most one operational rule'', which we will call \emph{rule determinism}. Note that the local determinism lemma assumes, rather than infers, the existence of the step $s_1' \to_{\alpha \ r} s_2'$ which can be understood as ``mimicking'' the step $s_1 \to_r s_2$.

\item \emph{Strong local confluence} is proven: for reachable states $s_1$ and $s_1'$ with $s_2 \leftto_r s_1 \approx_{\alpha\beta} s_1' \to_{r'} s_2'$, there exist states $s_3$ and $s_3'$ such that $s_2 \to^=_{\alpha^{-1} \ r'} s_3 \approx_{\alpha\beta} s_3' \leftto_{\alpha \ r}^= s_3'$. The case where $r' = \alpha \ r$ follows from local determinism, and the case $r' \neq \alpha \ r$ is proven by a double case analysis on $s_1 \to_r s_2$ and $s_1' \to_{r'} s_2'$.

\item The relation $\to$ is lifted to a relation $\to_\abeq$ over classes of $\approx$-equivalent states, i.e., $C \toeq C'$ if there exist states $s \in C$ and $s' \in C'$ such that $s \to s'$. From strong local confluence, it follows that $C_2  \lefttoeq C_1 \toeq  C_3$ implies the existence of a class $C_4$ such that $C_2 \toeq^= C_4 \lefttoeq^= C_3$.

\item From this locally commuting property of $\to_\abeq$, it is claimed that a routine \emph{diagram tiling} \cite{bezem1998diagram} proof  establishes \emph{confluence} of $\toeq$, i.e., that $C_2  \lefttoeq^* C_1 \toeq^* C_3$ implies $C_2 \toeq^* C_4 \lefttoeq^* C_3$ for some $C_4$. The proof itself is not given.

\item Without further comment, confluence of $\to$ modulo $\approx$ is concluded from confluence of $\toeq$.

\item Determinacy of $\to$ modulo $\approx$ is subsequently obtained as a corollary.
\end{enumerate}

From a formal perspective, we first observe that item (5) is problematic. Namely, a joining reduction $C \toeq^* C'$ could be due to a noncontiguous $\to$ reduction sequence
\[
S =
\begin{array}{lllllllllllllll}
s  & \to & s_0  & & s_1 & \to & s_2 \\
& & \happrox & & \happrox & & \happrox & \\
& & s_0' & \to & s_1'  & & s_2' & \to & \cdots & \to & s'
\end{array}
\]
where $s \in C$, $s \in C'$, and $s_i \neq s_i'$ for some $i$. However, the existence of a contiguous $\to$ reduction follows from such an $S$ if equivalent states can mimic each other's steps, i.e., if whenever $s_2 \leftto s_1 \approx s_1'$, there exists an $s_2'$ such that  $s_1' \to s_2' \approx s_2$. This property, which we will call the \emph{mimicking property}, is stronger than local determinism.

During the formalization process, we first proved the mimicking property. We then realized that strong local confluence and mimicking can be applied directly in a diagram tiling proof for proving confluence of $\to$, eliminating the need to lift and unlift the relation $\to$. This simplifies items~\mbox{(3--5)} above. We also realized that the statements of local determinism and strong local confluence could be simplified: the equivalences in the sources of the divergences are not needed (e.g., the condition for local determinism becomes $s_2 \leftto_r s_1 \to_r s_2'$). This simplifies items~(1--2), which we experienced to be advantageous for the mechanization: we only have to reason about renamings (more specifically, swaps) whenever divergent nondeterministic steps are considered. Item~(6) is the same in our account.

In summary, the outline of our proof is as follows:
\begin{enumerate}
	\item Rule determinism is established.
	\item We prove our simplified statement of local determinism: if $s_2 \leftto_r s_1 \to_r s_2'$, then $s_2 \approx s_2'$. 
	\item We prove our simplified statement of strong local confluence: if $s_1$ is reachable and $s_2 \leftarrow_r s_1 \rightarrow_{r'} s_2'$, then there exist $s_3$ and $s_3'$ such that $s_2 \rightarrow_{r'}^= s_3 \approx s_3' \leftarrow_r^= s_2'$. As a technical detail, this lemma in addition requires that \textit{Rid} and \textit{Lid} are infinite sets.
	\item Independently, we prove the mimicking property.
	\item From the mimicking property and strong local confluence, confluence of $\to$ modulo $\approx$ is proven using a straightforward diagram tiling proof.
	\item Determinacy of $\to$ modulo $\approx$ is obtained as a corollary.
\end{enumerate}

\subsection{Formalization}
\label{sec:formalization}

We now explain our proof in more detail, and immediately relate it to the Isabelle formalization.

Theory \determinacy first proves nine rule determinism lemmas, one for each rule of the operational semantics. Intuitively, these lemmas state that if $s \to s'$ and $s$ matches the source state of a rule $R$, then $s'$ matches the target state of $R$. The lemma for \apply (lemma \texttt{app\_deterministic}), for instance, states that
\[
\begin{array}{ll}
s \ r  = \langle \sigma, \tau, \E[ (\lambda x. e) \ v ]   \rangle  \Longrightarrow (s \to s') = \\ (s' = s(r \mapsto \langle \sigma, \tau, \E[ [v/x] e] \rangle)).
\end{array}
\]
The lemmas for \new and \fork are deterministic up to naming only. For instance, the rule for \new (lemma \texttt{new\_pseudo\allowbreak deter\allowbreak ministic}) states that
\[
\begin{array}{ll}
s \ r  = \langle \sigma, \tau, \E[ \mathsf{ref} \ v]\rangle \Longrightarrow (s \to s') = \\
 (\exists l\ldotp l \notin \LID \ s \land s' = s(r \mapsto \langle \sigma, \tau(l \mapsto v), \E[ l ] \rangle))\text{.}
\end{array}
\]
The proofs of these lemmas follow easily from the unique decomposition lemma. The lemmas are declared as simplification rules, and are useful in the proof of local determinism.

\begin{lemma}[\texttt{local\_determinism}]\label{lem:local_determinism}
	$s_2 \leftarrow_r s_1 \rightarrow_r s_2' \Longrightarrow s_2 \approx s_2'$.
\end{lemma}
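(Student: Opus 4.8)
The plan is to reduce the statement to a case analysis on which operational rule fires for revision $r$, using the infrastructure already assembled in theory \determinacy. Since $s_1 \to_r s_2$, the unique decomposition lemma guarantees that the local expression $(s_1 \ r)_e$ factors as $\E[\rho]$ for a unique context $\E$ and redex $\rho$, and the syntactic shape of $\rho$ determines which single rule is applicable for $r$ — this is exactly \emph{rule determinism}. Consequently both derivations $s_1 \to_r s_2$ and $s_1 \to_r s_2'$ must proceed by the \emph{same} rule. I would therefore split on the rule and, in each case, rewrite with the corresponding rule-determinism lemma (e.g.\ \texttt{app\_deterministic}, \texttt{new\_pseudodeterministic}), each of which characterizes the successor state exactly given that $s_1 \ r$ matches that rule's source pattern. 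Because these lemmas are declared as simplification rules, the bulk of each case should be dispatched automatically.

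For the seven deterministic rules — \apply, \ifTrue, \ifFalse, \get, \set, \join and \joinepsilon — the matching lemma pins down the successor completely as a function of $s_1$ and $r$. Hence $s_2 = s_2'$ in each of these cases, and the goal $s_2 \approx s_2'$ then follows immediately from reflexivity of $\approx$ (lemma $\alpha\beta\texttt{\_refl}$).

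The genuine work lies in the two nondeterministic rules. For \new, the lemma yields $s_2 = s_1(r \mapsto \langle \sigma, \tau(l \mapsto v), \E[l] \rangle)$ and $s_2' = s_1(r \mapsto \langle \sigma, \tau(l' \mapsto v), \E[l'] \rangle)$ for locations $l, l' \notin \LID \ s_1$. If $l = l'$ I am done; otherwise I would exhibit the witnessing swap $\beta = \mathit{id}(l := l', l' := l)$ together with $\alpha = \mathit{id}$ and show $\R \ \alpha \ \beta \ s_2 = s_2'$, giving $s_2 \approx_{\alpha\beta} s_2'$ and hence $s_2 \approx s_2'$. The decisive point is that $l'$ is fresh for $s_1$ and $l$ occurs in $s_2$ only at the newly allocated position, so the swap renames precisely that single occurrence of $l$ to $l'$ and acts as the identity on $s_1$ and on the carried-over components $\sigma$, $\tau$, $\E$ and $v$. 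This is where the swap-elimination rules (e.g.\ \texttt{eliminate\_swap\_val}), which discharge their side conditions from the freshness hypotheses $l, l' \notin \LID \ s_1$, do the heavy lifting, together with the distributive renaming laws that push $\R \ \alpha \ \beta$ through the state update. The \fork case is entirely analogous, using $\alpha = \mathit{id}(r'' := r''', r''' := r'')$ on the two freshly forked revision identifiers (with $\beta = \mathit{id}$) and the freshness condition $r'', r''' \notin \RID \ s_1$.

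The main obstacle I anticipate is the bookkeeping in these two cases: one must verify that the chosen swap is the identity on every component of $s_1$ and touches only the single allocated identifier, which amounts to discharging a family of swap-elimination side conditions, all ultimately reducible to freshness. In practice I expect Isabelle's simplifier, primed with the distributive renaming laws and with the swap rules declared as simplification and introduction rules, to close these goals with little manual guidance; the conceptual content is confined to \emph{choosing} the swap, after which the remaining steps are routine.
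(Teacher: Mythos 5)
Your proposal is correct and follows essentially the same route as the paper's proof: a case analysis on the rule fired by $r$, rule-determinism lemmas yielding $s_2 = s_2'$ in the seven deterministic cases, and for \new and \fork the same witnessing swaps ($\alpha = \id$, $\beta = \id(l := l', l' := l)$ and the symmetric swap on revision identifiers, respectively), discharged via the distributive renaming laws and swap-elimination rules under the freshness side conditions. The only cosmetic differences are that the paper phrases the split as a case analysis on the left step $s_2 \leftarrow_r s_1$ and does not separate out the trivial $l = l'$ subcase, since the swap argument covers it uniformly.
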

\begin{proof}
	By a case analysis on the left step $s_2 \leftarrow_r s_1$. In every case other than (\textit{new}) and (\textit{fork}), we obtain $s_2' = s_2$ by rule determinism: a case distinction on the right step is not necessary. In case (\textit{new}), we are given that $s_2 = s(r \mapsto \langle \sigma, \tau(l \mapsto v), \E[l]\rangle)$ (for $l \notin \LID \ s$), and by rule determinism,  $s_2' = s(r \mapsto \langle \sigma, \tau(l' \mapsto v), \E[l']\rangle)$ (for $l' \notin \LID \ s$). Define $\alpha = \id$ and the swap $\beta = \id(l := l', l' := l)$. It suffices to prove $\R \ \alpha \ \beta \ s_2 = s_2'$, which is derived using \textit{auto} roughly as follows. The distributive laws for renaming push the renaming inwards. The conclusions of the swap rules get matched. The assumptions of the swap rules are derived from $l \notin \LID \ s$, $l' \notin \LID \ s$ and the simplification rules for occurrences, canceling out all redundant renamings. The argument for case~(\textit{fork}) is analogous to case (\textit{new}).
\end{proof}

Our statement of strong local confluence is as follows.

\begin{theorem}[\texttt{strong\_local\_confluence}]\label{lem:slc}
	Assume that $s_1$ is reachable and that \textit{Rid} and \textit{Lid} are infinite. Then $s_2 \leftarrow_r s_1 \rightarrow_{r'} s_2' \Longrightarrow \exists s_3 \ s_3' \ldotp s_2 \rightarrow_{r'}^= s_3 \approx s_3' \leftarrow_r^= s_2'$.
\end{theorem}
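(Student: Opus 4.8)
The plan is to split on whether the two diverging steps are performed by the same revision. If $r = r'$, then local determinism (Lemma~\ref{lem:local_determinism}) already gives $s_2 \approx s_2'$, so the diagram closes trivially: take $s_3 = s_2$ and $s_3' = s_2'$ and use zero steps on either side (the reflexive part of $\rightarrow^=$). The entire content of the theorem therefore lies in the case $r \neq r'$, which I would attack by a double case analysis on the rule used for $s_1 \to_r s_2$ and the rule used for $s_1 \to_{r'} s_2'$.

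The guiding observation for $r \neq r'$ is that a step indexed by one revision leaves the other revision able to take its step, and leaves enough freedom in the choice of freshly allocated identifiers to reach a common state up to renaming. Concretely, for the five rules that touch only the local state of the acting revision and whose outcome is fully determined (\apply, \ifTrue, \ifFalse, \get, \set), the two footprints are disjoint, so the steps commute on the nose: I let $r'$ replay its step on $s_2$ and $r$ replay its step on $s_2'$, obtaining $s_3 = s_3'$ with no renaming required (here the rule determinism lemmas, proved via unique decomposition, pin down each replayed step). The nondeterministic allocation rules \new and \fork are the first source of real work: when both branches allocate, the identifier chosen in one branch may coincide with the one chosen in the other. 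I would replay each step choosing a genuinely fresh identifier, and then exhibit a swap $\R \ \id \ \beta$ (or $\R \ \alpha \ \id$ for revisions) exchanging the two allocated names, discharging $s_3 \approx s_3'$ with the distributive renaming laws and the swap-elimination lemmas exactly as in the proof of local determinism. This is where the hypotheses bite: finiteness of the identifiers of a reachable state, together with infiniteness of \textit{Lid} and \textit{Rid}, guarantees that the required fresh witness exists.

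The genuinely delicate cases are \join and \joinepsilon, and I expect these to be the main obstacle. For \joinepsilon the result collapses to $\epsilon$; I would argue that the opposite branch can also reach $\epsilon$, using that $r \neq r'$ keeps the $\rjoin \ r''$ redex of the collapsing revision intact and that the joined $\bot$-revision $r''$ cannot be resurrected by a concurrent \fork --- the corrected side condition $r'' \notin \RID \ s$ forbids reallocating an identifier that still occurs in an expression --- so the collapsing step remains enabled and both sides meet at $\epsilon \approx \epsilon$. For \join the acting revision removes its joinee and shadows its local store with the joinee's; here commutation is not automatic, because the other step might have altered the joinee, or the two joins might target a common revision. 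This is precisely where reachability is indispensable: I would invoke the inductive invariant $\subsume_G \land \subsnap_G$ (Lemma~\ref{lem:inductive_inv}, promoted to an execution invariant in \executions) to control which location identifiers can appear in the merged store, so that the merge performed in one order agrees up to renaming with the merge performed in the other. Assembling these cases --- many of which are symmetric and handled uniformly --- yields the closing states $s_3, s_3'$ and the equivalence $s_3 \approx s_3'$, completing the diamond.
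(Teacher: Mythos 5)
Most of your skeleton coincides with the paper's proof: the split $r = r'$ (closed by Lemma~\ref{lem:local_determinism}) versus $r \neq r'$, exact commutation for the five deterministic local rules, replay-with-a-common-fresh-identifier plus a swap for the \new\slash\new and \fork\slash\fork collisions (with reachability giving finiteness of $\RID \ s_1$ and $\LID \ s_1$, and infiniteness of \textit{Rid}, \textit{Lid} supplying the fresh witness), and the observation that the corrected \fork side condition $r'' \notin \RID \ s$ keeps a pending \joinepsilon enabled after the other revision's step.

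The genuine gap is your \join case. You propose to handle two joins targeting a common revision $r''$ by performing the merge in both orders and reconciling the two merged stores up to renaming, invoking the invariant $\subsume_G \wedge \subsnap_G$ of Lemma~\ref{lem:inductive_inv}. But the merge cannot be performed in both orders: rule \join erases the joinee ($r'' \mapsto \bot$), the joinee itself can take no step (its expression is a value), and no concurrent \fork can re-create $r''$ under the corrected side condition, so after one revision joins $r''$ the other revision's pending $\rjoin \ r''$ can only fire as \joinepsilon. Hence in either order the second join \emph{fails}, both branches collapse, and the diagram closes with $s_3 = s_3' = \epsilon$ --- the same mechanism as your \joinepsilon case, and precisely why rule \joinepsilon exists (``it is needed to establish determinacy''). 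When the right step does not join $r''$, the two steps commute on the nose, with no renaming. Consequently $\subsume_G \wedge \subsnap_G$ plays no role in this theorem at all (in the paper it serves only to prove the two formulations of the \new\slash\get\slash\set side conditions equivalent); reachability enters strong local confluence solely through the freshness argument for the allocation rules. A proof following your join-case plan as written would get stuck, because the second merged state it tries to construct does not exist in the calculus.
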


The case $r = r'$ follows from Lemma \ref{lem:local_determinism}. For the $r \neq r'$ case, we conceptually follow the original proof in that we proceed by a double case analysis on the assumption $s_2 \leftto_r s_1 \to_{r'} s_2'$. This generates 81 cases, many of which are highly similar. We manage this explosion of proof obligations as follows. 

First, we prove the following  lemma which helps deal with the 36 symmetric cases:
\begin{lemma}[\texttt{SLC\_sym}]\label{lem:slc_sym}
	$\exists s_3 \ s_3' \ldotp s_2 \rightarrow_{r'}^= s_3 \approx s_3' \leftarrow_r^= s_2' \Longrightarrow
	\exists s_3 \ s_3' \ldotp s_2 \rightarrow_{r'}^= s_3 \approx s_3' \leftarrow_r^= s_2'$.
	\end{lemma}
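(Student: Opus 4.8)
The plan is to read \texttt{SLC\_sym} as the statement that the closing condition of strong local confluence is invariant under exchanging the two diverging branches: a pair of states that closes the diamond for the divergence $s_2 \leftarrow_r s_1 \rightarrow_{r'} s_2'$ also closes it, with the roles of $r$ and $r'$ and of $s_2$ and $s_2'$ swapped, for the mirrored divergence $s_2' \leftarrow_{r'} s_1 \rightarrow_{r} s_2$. The proof is then a purely structural rearrangement. Suppose we are handed witnesses $s_3$ and $s_3'$ with $s_2 \rightarrow_{r'}^= s_3$, $s_3 \approx s_3'$, and $s_3' \leftarrow_r^= s_2'$. To discharge the mirrored goal I would simply instantiate its two existentially quantified witnesses with the \emph{same} two states in the opposite order, supplying $s_3'$ where the goal expects the first witness and $s_3$ where it expects the second.

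Under this instantiation the two reflexive-closure obligations are already in hand, since they are nothing but the two halves of the given diamond read in the reverse direction: the required $s_2' \rightarrow_r^= s_3'$ is exactly the given $s_3' \leftarrow_r^= s_2'$, and the required $s_3 \leftarrow_{r'}^= s_2$ is exactly the given $s_2 \rightarrow_{r'}^= s_3$. The only conjunct not available verbatim is the central equivalence, where the goal demands $s_3' \approx s_3$ in place of the hypothesis $s_3 \approx s_3'$; this is immediate from the symmetry of renaming-equivalence, recorded earlier as lemma $\alpha\beta\texttt{\_sym}$.

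Consequently I expect there to be no real obstacle: once the symmetry law for $\approx$ is made available, the goal should close by \emph{auto} or \emph{blast}. The significance of the lemma is organizational rather than mathematical. In the double case analysis on $s_2 \leftarrow_r s_1 \rightarrow_{r'} s_2'$ underlying Theorem~\ref{lem:slc}, each ordered pair of operational rules yields one case, and \texttt{SLC\_sym} lets a solved case for one ordering of a pair be reused for the reversed ordering. This is precisely what collapses the $36$ off-diagonal symmetric cases of the $9 \times 9$ matrix onto their partners, leaving only the diagonal together with one triangular half to be treated explicitly.
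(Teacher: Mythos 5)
Your proof is correct and is exactly the intended argument: instantiate the mirrored goal with the two given witnesses in swapped order, observe that the two reflexive-step conjuncts are the given ones read in reverse, and close the remaining conjunct with the symmetry of $\approx$ (lemma $\alpha\beta\texttt{\_sym}$). Note that you have silently repaired a typo in the paper's statement of \texttt{SLC\_sym} --- as printed, its hypothesis and conclusion are literally identical, making it a tautology; the intended hypothesis (the one your proof uses) exchanges $r$ with $r'$ and $s_2$ with $s_2'$, and your reading, your proof, and your account of how the lemma collapses the 36 symmetric cases all match the paper's use of the lemma, for which it gives no explicit proof.
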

When applied in a proof context for a case (\textit{rule})/(\textit{rule}$'$), \texttt{SLC\_sym} transforms the conclusion into its symmetric version, which at that point already has a proof.

Second, in many cases the steps commute directly. In these cases, the following lemma is used as an introduction rule:
\begin{lemma}[\texttt{SLC\_commute}]
	$s_2 \to_{r'} s_3 = s_3' \leftto_r s_2' \Longrightarrow s_2 \rightarrow_{r'}^= s_3 \approx s_3' \leftarrow_r^= s_2'$.
\end{lemma}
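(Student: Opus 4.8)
The plan is to observe that this lemma is essentially immediate: it merely repackages a pair of commuting single steps into the reflexive-closure shape demanded by the strong local confluence statement (Theorem~\ref{lem:slc}). Note that, unlike Lemma~\ref{lem:slc_sym}, the variables $s_3$ and $s_3'$ here are free rather than existentially bound, which is precisely what makes the lemma usable as an introduction rule. Concretely, I would assume the hypothesis, which provides a step $s_2 \to_{r'} s_3$, the equality $s_3 = s_3'$, and a step $s_2' \to_r s_3'$ (written $s_3' \leftto_r s_2'$), and then establish the three conjuncts of the conclusion in turn.

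First, $s_2 \to_{r'}^= s_3$ follows because any single $\to_{r'}$ step is contained in its reflexive closure $\to_{r'}^=$; symmetrically, $s_3' \leftto_r^= s_2'$ (that is, $s_2' \to_r^= s_3'$) follows from the given $\to_r$ step. The middle conjunct $s_3 \approx s_3'$ holds by reflexivity of $\approx$ (lemma $\alpha\beta\texttt{\_refl}$), witnessed by the identity renamings $\alpha = \beta = \mathit{id}$, since $s_3 = s_3'$. Instantiating the (implicit) witnesses with $s_3$ and $s_3'$ then discharges the goal.

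There is no genuine obstacle here; the only care required is syntactic, namely matching the reflexive-closure notation and supplying reflexivity of $\approx$, so in Isabelle the lemma should fall to \emph{simp}/\emph{auto} seeded with $\alpha\beta\texttt{\_refl}$. The reason it is nonetheless worth stating separately is its use as an introduction rule: in the many branches of the double case analysis within the proof of Theorem~\ref{lem:slc} where the two divergent redexes occupy disjoint parts of the global state and hence commute on the nose, invoking \texttt{SLC\_commute} collapses $s_3$ and $s_3'$ to the common successor and reduces the diamond obligation to the two trivially available single steps, thereby sidestepping any renaming reasoning in exactly those cases.
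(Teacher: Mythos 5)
Your proof is correct and takes essentially the same approach the paper intends: the paper states \texttt{SLC\_commute} without any proof body precisely because it is immediate, following---as you argue---from the inclusion of a single step in its reflexive closure and from reflexivity of $\approx$ (witnessed by identity renamings) applied to the equality $s_3 = s_3'$. Your observation about its role as an introduction rule that refines the goal in the commuting cases of Theorem~\ref{lem:slc} likewise matches the paper's stated purpose for the lemma.
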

By applying the rule, the proof obligation is refined, which helps guide \textit{auto} and leads to understandable Isar proofs. Lemmas \texttt{join\_and\_local\_commute}, \texttt{local\_steps\_\allowbreak commute} and \texttt{local\_and\_{\allowbreak}rfork\_commute} have similar roles, refining the proof obligation even further for the commuting pairs (\textit{join})/(\textit{local}), (\textit{local})/(\textit{local}) and
(\textit{local})/(\textit{fork}), respectively.

	Finally, we only perform a case analysis on the left step $s_2 \leftto_r s_1$ in the Isabelle proof to Theorem~\ref{lem:slc}. Each of the nine cases is established by a separate lemma named \texttt{SLC\_}\textit{rule}, with \textit{rule} one of the nine rule names. These nine lemmas are proven in the order of the following proof sketch.

\begin{proof}[Proof  of Theorem~\ref{lem:slc}]
	The case distinction on the left step \mbox{$s_2 \leftto_r s_1$} generates nine cases that are proven in the following order. We use commuting diagrams to visually summarize proofs. 
	\begin{enumerate}
		\item (\textit{join}$_\epsilon$):
		Suppose revision $r$ joins a nonexistent revision $r''$ in the left step.  $s_1 \to_{r'} s_2'$ is either a (\textit{join}$_\epsilon$) step (joining some $r'''$) or not (denoted by $\overline{\textit{join}_\epsilon}$):
	\begin{center}
			\begin{minipage}{.211\textwidth}
				\centering
				\begin{tikzcd}[row sep=0.6em, column sep=3.1em]
					s_1 \arrow[rr, "r' \colon \text{join}_\epsilon (r''')"] \arrow[dd, "r \colon \text{join}_\epsilon (r'')"', swap] &  & s_2' \arrow[dd, equals] \\
					&  &  \\
					s_2 \arrow[r, equals] & \epsilon \arrow[r, equals] & \epsilon
				\end{tikzcd}
			\end{minipage}
			\begin{minipage}{.211\textwidth}
				\centering
				\begin{tikzcd}[row sep=0.6em, column sep=3.1em]
					s_1 \arrow[rr, "r' \colon \overline{\text{join$_\epsilon$}}"] \arrow[dd, "r \colon \text{join}_\epsilon (r'')"', swap] &  & s_2' \arrow[dd, "r \colon \text{join}_\epsilon(r'')", swap] \\
					&  &  \\
					s_2 \arrow[r, equals] & \epsilon \arrow[r, equals] & \epsilon
				\end{tikzcd}
			\end{minipage}
		\end{center}
	Observe that the right diagram would fail for the case $\overline{\textit{join}_\epsilon} =$ \fork if side condition \eqref{eq:fork_wrong} were used.
		
		\item (\textit{join}):
		Suppose revision $r$ successfully joins a revision $r''$ in the left step. $s_1 \to_{r'} s_2'$ either  also succesfully joins $r''$ or not (denoted by $\overline{\textit{join}(r'')}$):
		\begin{center}
			\begin{minipage}{.211\textwidth}
				\centering
				\begin{tikzcd}[row sep=0.6em, column sep=3.1em]
					s_1 \arrow[rr, "r' \colon \text{join} (r'')"] \arrow[dd, "r \colon \text{join} (r'')"', swap] &  & s_2' \arrow[dd, "r \colon \text{join}_\epsilon(r'')", swap] \\
					& &  \\
					s_2 \arrow[r, "r' \colon \text{join}_\epsilon(r'')"'] & \epsilon \arrow[r, equals] & \epsilon
				\end{tikzcd}
			\end{minipage}
			\begin{minipage}{.211\textwidth}
				\centering
				\begin{tikzcd}[row sep=0.6em, column sep=3.1em]
					s_1 \arrow[rr, "r' \colon \overline{\text{join} (r'')}"] \arrow[dd, "r \colon \text{join} (r'')"', swap] &  & s_2' \arrow[dd, "r \colon \text{join}(r'')", swap] \\
					& &  \\
					s_2 \arrow[r, "r' \colon \overline{\text{join}(r'')}"'] & s_3 \arrow[r, equals] & s_3'
				\end{tikzcd}
			\end{minipage}
\end{center}
		
		\item (\textit{local}):
		Under a (\textit{local}) step we here understand any step that is an \apply, \ifTrue, \ifFalse, \get or  \set step. The right step is a ($*$) (\textit{local}), (\textit{new}) or (\textit{fork}) step:
		\begin{center}
			\begin{tikzcd}[row sep=0.6em, column sep=3.1em]
				s_1 \arrow[rr, "r' \colon *"] \arrow[dd, "r \colon \text{local}"] &  & s_2' \arrow[dd, "r \colon \text{local}", swap] \\
				&  &  \\
				s_2 \arrow[r, "r' \colon *", swap] & s_3 \arrow[r, equals] & s_3'
			\end{tikzcd}
		\end{center}
		
		\item \new:
		Suppose the left step allocates a location identifier $l$. Either the right step also allocates $l$ or it does not (i.e., it allocates some $l' \neq l$ or is some (\textit{fork}) step):
		\vspace{1.5mm}
		\begin{center}
			\begin{minipage}{.211\textwidth}
				\centering
				\begin{tikzcd}[row sep=0.6em, column sep=3.1em]
					s_1 \arrow[rr, "r' \colon \text{new}(l)"] \arrow[dd, "r \colon \text{new}(l)"] &  & s_2' \arrow[dd, "r \colon \text{new}(l'')", swap] \\
					&  &  \\
					s_2 \arrow[r, "r' \colon \text{new}(l'')", swap] & s_3 \arrow[r, "\approx" description, no head] & s_3'
				\end{tikzcd}
			\end{minipage}
			\begin{minipage}{.211\textwidth}
				\centering
				\begin{tikzcd}[row sep=0.6em, column sep=3.1em]
					s_1 \arrow[rr, "r' \colon \overline{\text{new}(l)}"] \arrow[dd, "r \colon \text{new}(l)"] &
					& s_2' \arrow[dd, "r \colon \text{new}(l)", swap]  \\
					&   &  \\
					s_2 \arrow[r, "r' \colon \overline{\text{new}(l)}", swap] & s_3 \arrow[r, equals] & s_3' 
				\end{tikzcd}
			\end{minipage}
		\end{center}
		\item \fork:
		Finally, we consider the case where the left step is a (\textit{fork}) step. The right step is a \fork step as well. Both steps either fork the same revision identifier $r''$ or not ($r''' \neq r''$):
	\begin{center}
			\begin{minipage}{.211\textwidth}
				\centering
				\begin{tikzcd}[row sep=0.6em, column sep=3.1em]
					s_1 \arrow[rr, "r' \colon \text{fork}(r'')"] \arrow[dd, "r \colon \text{fork}(r'')"] &  & s_2' \arrow[dd, "r \colon \text{fork}(r''')", swap] \\
					& &  \\
					s_2 \arrow[r, "r' \colon \text{fork}(r''')", swap] & s_3 \arrow[r, "\approx" description, no head] & s_3'
				\end{tikzcd}
			\end{minipage}
			\begin{minipage}{.211\textwidth}
				\centering
				\begin{tikzcd}[row sep=0.6em, column sep=3.1em]
					s_1 \arrow[rr, "r' \colon \text{fork}(r''')"] \arrow[dd, "r \colon \text{fork}(r'')"] &  & s_2' \arrow[dd, "r \colon \text{fork}(r'')", swap] \\
					&  &  \\
					s_2 \arrow[r, "r' \colon \text{fork}(r''')"'] & s_3 \arrow[r, equals] & s_3'
				\end{tikzcd}
			\end{minipage}
\end{center}
	\end{enumerate}

The following table summarizes which case is addressed by which item in the given enumeration. The values for symmetric cases are grayed out and solved using Lemma~\ref{lem:slc_sym}.
	\vspace{1mm}

\begin{tabular}{l|lllllllll}
	& \rotatebox{90}{$\joinepsilon$}      & \rotatebox{90}{$\join$} & \rotatebox{90}{$\apply$}  & \rotatebox{90}{$\ifTrue$} & \rotatebox{90}{$\ifFalse$} & \rotatebox{90}{$\get$} & \rotatebox{90}{$\set$}    & \rotatebox{90}{$\new$} & \rotatebox{90}{$\fork$} \\ \hline
	$\joinepsilon$ & 1                                                 & 1                        & 1                        & 1                        & 1                        & 1                        & 1                        & 1                        & 1       \\
	$\join$        & {\color[HTML]{9B9B9B} 1}                          & 2                        & 2                        & 2                        & 2                        & 2                        & 2                        & 2                        & 2       \\
	$\apply$       & {\color[HTML]{9B9B9B} 1}                          & {\color[HTML]{9B9B9B} 2} & 3                        & 3                        & 3                        & 3                        & 3                        & 3                        & 3       \\
	$\ifTrue$      & {\color[HTML]{9B9B9B} 1}                          & {\color[HTML]{9B9B9B} 2} & {\color[HTML]{9B9B9B} 3} & 3                        & 3                        & 3                        & 3                        & 3                        & 3       \\
	$\ifFalse$     & {\color[HTML]{9B9B9B} 1}                          & {\color[HTML]{9B9B9B} 2} & {\color[HTML]{9B9B9B} 3} & {\color[HTML]{9B9B9B} 3} & 3                        & 3                        & 3                        & 3                        & 3       \\
	$\get$         & {\color[HTML]{9B9B9B} 1}                          & {\color[HTML]{9B9B9B} 2} & {\color[HTML]{9B9B9B} 3} & {\color[HTML]{9B9B9B} 3} & {\color[HTML]{9B9B9B} 3} & 3                        & 3                        & 3                        & 3       \\
	$\set$         & {\color[HTML]{9B9B9B} 1}                          & {\color[HTML]{9B9B9B} 2} & {\color[HTML]{9B9B9B} 3} & {\color[HTML]{9B9B9B} 3} & {\color[HTML]{9B9B9B} 3} & {\color[HTML]{9B9B9B} 3} & 3                        & 3                        & 3       \\
	$\new$         & {\color[HTML]{9B9B9B} 1}                          & {\color[HTML]{9B9B9B} 2} & {\color[HTML]{9B9B9B} 3} & {\color[HTML]{9B9B9B} 3} & {\color[HTML]{9B9B9B} 3} & {\color[HTML]{9B9B9B} 3} & {\color[HTML]{9B9B9B} 3} & 4                        & 4       \\
	$\fork$        & {\color[HTML]{9B9B9B} 1}                          & {\color[HTML]{9B9B9B} 2} & {\color[HTML]{9B9B9B} 3} & {\color[HTML]{9B9B9B} 3} & {\color[HTML]{9B9B9B} 3} & {\color[HTML]{9B9B9B} 3} & {\color[HTML]{9B9B9B} 3} & {\color[HTML]{9B9B9B} 4} & 5       \\ 
\end{tabular}

\qedhere
\end{proof}

From Theorem \ref{lem:slc}, we obtain the following lemma as a corollary:
\begin{lemma}[\texttt{SLC\_top\_relaxed}]
	Assume that $s_1$ is reachable and that \textit{Rid} and \textit{Lid} are infinite. Then $s_2 \leftarrow s_1 \rightarrow^= s_2' \Longrightarrow \exists s_3 \ s_3' \ldotp \ s_2 \rightarrow^= s_3 \approx s_3' \leftarrow^= s_2'$.
\end{lemma}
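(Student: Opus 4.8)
The plan is to derive this corollary directly from Theorem~\ref{lem:slc} (strong local confluence) by discharging the two features that separate the two statements: the corollary ranges over the abstracted relation $\to = \bigcup \{ \to_r \mid r \in \mathit{Rid} \}$ rather than a fixed indexed relation $\to_r$, and its right-hand divergence is a reflexive step $\to^=$ rather than a full step. Accordingly, I would begin by case-splitting on whether the right step $s_1 \to^= s_2'$ is trivial (that is, $s_1 = s_2'$) or a genuine step $s_1 \to s_2'$.

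In the trivial case $s_1 = s_2'$, the diagram closes with no appeal to Theorem~\ref{lem:slc}: take $s_3 = s_3' = s_2$. Then $s_2 \to^= s_3$ holds reflexively, $s_3 \approx s_3'$ is reflexivity of $\approx$ (lemma $\alpha\beta\texttt{\_refl}$), and $s_2' \to^= s_3'$ holds because $s_2' = s_1$ and the given left step $s_1 \to s_2 = s_3'$ embeds into $\to^=$.

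In the genuine case, I would unfold the definition of $\to$ on both divergent steps to extract revision indices $r$ and $r'$ with $s_1 \to_r s_2$ and $s_1 \to_{r'} s_2'$. Since $s_1$ is reachable and $\textit{Rid}$, $\textit{Lid}$ are infinite, Theorem~\ref{lem:slc} applies and yields states $s_3, s_3'$ with $s_2 \to_{r'}^= s_3 \approx s_3' \leftarrow_r^= s_2'$. Re-abstracting via $\to_{r'} \subseteq \to$ and $\to_r \subseteq \to$ weakens the indexed reductions to $s_2 \to^= s_3$ and $s_2' \to^= s_3'$, which is exactly the claimed conclusion, with the same witnesses $s_3, s_3'$ and the intermediate $\approx$ carried over unchanged.

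The proof carries essentially no mathematical difficulty, since Theorem~\ref{lem:slc} does all the real work; the only care required is bookkeeping. The mild obstacle is handling the reflexive closure on the right divergence correctly --- in particular remembering that the trivial case must still produce a closing diagram rather than holding vacuously --- and threading the existential witnesses from the indexed lemma through the un-indexing step without disturbing the $\approx$ in the middle.
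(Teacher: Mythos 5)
Your proposal is correct and matches the paper's approach: the paper presents \texttt{SLC\_top\_relaxed} simply as a corollary of Theorem~\ref{lem:slc}, and your case split on whether the right step is trivial or genuine, followed by un-indexing $\to_r$ into $\to$, is exactly the routine derivation that claim leaves implicit. Your handling of the trivial case (taking $s_3 = s_3' = s_2$ and closing the diagram with the given left step) is the right bookkeeping.
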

This version of strong local confluence is used in the diagram tiling proofs. In the visualizations of the proofs, we will label its diagram representation with the name $\text{SLC}^=$. 

To establish the mimicking property, we first prove a series of lemmas of the form $ (\alpha \ r \in \RID \ (\R \ \alpha \ \beta \ S)) = (r \in \RID \ S) $ and $ (\beta \ l \in \LID \ (\R \ \alpha \ \beta \ S)) = (l \in \LID \ S)$ for each of the structures $S$, with $\alpha$ and $\beta$ bijections. These lemmas imply that the allocation of a fresh identifier $r$ or $l$ can be directly mimicked by allocating $\alpha \ r$ or $\beta \ l$, respectively. This fact is used in the proof to the lemma below.
\begin{lemma}[\texttt{mimicking}]\label{lem:mimicking}
	If $s \to_r s'$, then $\R \ \alpha \ \beta \ s \to_{\alpha \ r} \R \ \alpha \ \beta \ s'$ for  bijections $\alpha$ and $\beta$.
	\end{lemma}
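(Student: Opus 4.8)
The plan is to proceed by case analysis on the step $s \to_r s'$, i.e., by rule induction on the inductive relation \texttt{revision\_step}. Since the conclusion speaks only about the renamed states, there is exactly one obligation per operational rule, and in each case it suffices to exhibit a derivation of $\R \ \alpha \ \beta \ s \to_{\alpha \ r} \R \ \alpha \ \beta \ s'$ by the \emph{same} rule, now instantiated with the renamed data. The engine driving every case is the collection of distributive laws for renaming from theory \renaming: applied as rewrite rules, they push $\R \ \alpha \ \beta$ through the global-state update, through the local-state triple, through a store update $\tau(l \mapsto v)$, and through a plugged context $\E[e]$ (so that $\R \ \alpha \ \beta \ (\E[e])$ becomes $(\R \ \alpha \ \beta \ \E)[\R \ \alpha \ \beta \ e]$). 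This turns each side of the goal into the canonical shape consumed by the corresponding rule, with $\alpha \ r$ as the acting revision, $\beta \ l$ as the allocated or accessed location, and $\alpha \ r'$ as the forked or joined revision.

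For the local rules the obligations are then almost immediate. In case \apply the only nontrivial point is that renaming commutes with substitution, which is exactly the locale assumption \texttt{renaming\_distr\_subst}. Cases \ifTrue and \ifFalse are discharged by observing that \textsf{true} and \textsf{false} carry no identifiers, so the renamed guard is again \textsf{true}/\textsf{false}. For \get and \set the side condition $l \in \dom \ (\sigma :: \tau)$ must be transported to $\beta \ l \in \dom \ (\R \ \alpha \ \beta \ (\sigma :: \tau))$; since $\beta$ is a bijection and renaming a store acts on its domain via $\beta$, membership is preserved in both directions, and for \get the retrieved value $(\sigma :: \tau) \ l$ renames to $(\R \ \alpha \ \beta \ (\sigma :: \tau)) \ (\beta \ l)$ as required.

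The two allocating rules are where freshness reasoning is needed, and here I would invoke precisely the equalities established just above the statement: $(\beta \ l \in \LID \ (\R \ \alpha \ \beta \ s)) = (l \in \LID \ s)$ and $(\alpha \ r' \in \RID \ (\R \ \alpha \ \beta \ s)) = (r' \in \RID \ s)$. For \new, the side condition $l \notin \LID \ s$ then yields $\beta \ l \notin \LID \ (\R \ \alpha \ \beta \ s)$, so $\beta \ l$ is a legal fresh location for the renamed state; symmetrically, for \fork the condition $r' \notin \RID \ s$ yields $\alpha \ r' \notin \RID \ (\R \ \alpha \ \beta \ s)$. Case \join additionally requires that renaming distributes over the shadowing combination, $\R \ \alpha \ \beta \ (\tau :: \tau') = (\R \ \alpha \ \beta \ \tau) :: (\R \ \alpha \ \beta \ \tau')$, and over point deletion, $\R \ \alpha \ \beta \ (s(r' \mapsto \bot)) = (\R \ \alpha \ \beta \ s)(\alpha \ r' \mapsto \bot)$, both of which hold because $\alpha$ and $\beta$ are bijective; case \joinepsilon reduces to $\R \ \alpha \ \beta \ \epsilon = \epsilon$ together with the observation that $s \ r' = \bot$ gives $(\R \ \alpha \ \beta \ s) \ (\alpha \ r') = \bot$ by injectivity of $\alpha$.

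I expect no single deep difficulty: once the distributive and freshness lemmas are in place, each case is dispatched by rewriting followed by a single rule application, so the bulk of the work lies upstream in proving those supporting lemmas, in particular the $\RID$/$\LID$ transport equalities, which depend essentially on bijectivity. The most delicate point to get right is the freshness transport for \new and \fork: it is crucial that $\alpha$ and $\beta$ are \emph{bijections} rather than arbitrary renamings, since otherwise a fresh identifier could be mapped onto one already occurring in $s$ and the side condition would fail. This is exactly why the lemma is stated for bijections, and it is the hinge on which the subsequent mimicking argument — and hence the whole diagram-tiling confluence proof — turns.
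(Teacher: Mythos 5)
Your proposal is correct and follows essentially the same route as the paper: a case analysis on the step, discharging each rule by the renaming distributive laws (including \texttt{renaming\_distr\_subst} for \apply), and handling \new and \fork via exactly the identifier-transport equalities $(\beta \ l \in \LID \ (\R \ \alpha \ \beta \ s)) = (l \in \LID \ s)$ and $(\alpha \ r' \in \RID \ (\R \ \alpha \ \beta \ s)) = (r' \in \RID \ s)$ that the paper proves immediately before the lemma for this purpose. Your emphasis on bijectivity as the hinge of the freshness transport matches the paper's reasoning as well.
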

From Lemma \ref{lem:mimicking} we derive the following transitive variant, which is the version used in the diagram tiling proofs (we label its diagram representation with $\mathcal{M}^*$):
\begin{lemma}[\texttt{mimic\_trans}]\label{lem:mimic_trans}
	$s_2 \leftto^* s_1 \approx s_1' \Longrightarrow \exists s_2' \ldotp \   s_1' \to^* s_2' \approx s_2$.
\end{lemma}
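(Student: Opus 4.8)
The plan is to lift the single-step mimicking result of Lemma~\ref{lem:mimicking} to entire reduction sequences, exploiting that a single pair of bijections $\alpha, \beta$ witnesses the equivalence $s_1 \approx s_1'$ and can be reused unchanged at every step. First I would unfold the hypothesis $s_2 \leftto^* s_1 \approx s_1'$ into $s_1 \to^* s_2$ together with bijections $\alpha$ and $\beta$ satisfying $\R \ \alpha \ \beta \ s_1 = s_1'$. The goal then reduces to producing a reduction $s_1' \to^* s_2'$ with $s_2 \approx s_2'$, and the natural candidate is $s_2' := \R \ \alpha \ \beta \ s_2$, since $s_2 \approx \R \ \alpha \ \beta \ s_2$ holds immediately by definition of $\approx$ (the same $\alpha, \beta$ are bijections). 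It thus suffices to show $\R \ \alpha \ \beta \ s_1 \to^* \R \ \alpha \ \beta \ s_2$.

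The heart of the argument is a transitive version of mimicking: for fixed bijections $\alpha, \beta$, $s_1 \to^* s_2$ implies $\R \ \alpha \ \beta \ s_1 \to^* \R \ \alpha \ \beta \ s_2$. I would first record the step-level consequence of Lemma~\ref{lem:mimicking}: if $s \to s'$, then $s \to_r s'$ for some $r$ (because $\to \ = \bigcup_r \to_r$), and mimicking gives $\R \ \alpha \ \beta \ s \to_{\alpha \ r} \R \ \alpha \ \beta \ s'$, hence $\R \ \alpha \ \beta \ s \to \R \ \alpha \ \beta \ s'$ after discharging the existential over the performing revision. With this in hand, the transitive version follows by induction on the reflexive transitive closure $s_1 \to^* s_2$ (equivalently on the length $n$ in $s_1 \to^n s_2$): the reflexive base case is immediate, and in the inductive step $s_1 \to^* t \to s_2$ the induction hypothesis yields $\R \ \alpha \ \beta \ s_1 \to^* \R \ \alpha \ \beta \ t$, while the step-level consequence applied to $t \to s_2$ yields $\R \ \alpha \ \beta \ t \to \R \ \alpha \ \beta \ s_2$; transitivity of $\to^*$ then closes the case.

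Combining these pieces, $s_1' = \R \ \alpha \ \beta \ s_1 \to^* \R \ \alpha \ \beta \ s_2 = s_2'$ with $s_2 \approx s_2'$, which is exactly the required witness. I do not expect a genuine obstacle here: no confluence, determinism, or reachability reasoning is needed, and in particular the strong local confluence of Theorem~\ref{lem:slc} plays no role. The only point requiring care is that the \emph{same} pair of bijections must work uniformly along the entire reduction, and this is precisely what makes the induction go through---each rule application commutes with renaming by $\alpha, \beta$ independently of the others, so there is never any need to compose or alter the bijections between steps. In the Isabelle development this should amount to a short induction on the closure with Lemma~\ref{lem:mimicking} supplied as an introduction rule.
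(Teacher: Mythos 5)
Your proposal is correct and follows essentially the same route as the paper: the paper derives \texttt{mimic\_trans} directly from Lemma~\ref{lem:mimicking} by lifting single-step mimicking (with the fixed bijections $\alpha,\beta$ witnessing $s_1 \approx s_1'$) along the reduction sequence by induction on the reflexive transitive closure, exactly as you describe. The only detail worth noting is that your conclusion $s_2 \approx s_2'$ must be flipped to $s_2' \approx s_2$ via symmetry of $\approx$, which the paper has already established (lemma $\alpha\beta\texttt{\_sym}$), so this is immediate.
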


Now that we have the two necessary diagrams, we follow the original account by establishing confluence modulo $\approx$ in two steps:

\begin{lemma}[\texttt{strip\_lemma}]
	Assume that $s_1$ is reachable and that \textit{Rid} and \textit{Lid} are infinite. Then $s_2 \leftto^* s_1 \to^= s_2' \Longrightarrow \exists s_3 \ s_3' \ldotp \ s_2 \to^* s_3  \approx s_3' \leftto^* s_2'$.
\end{lemma}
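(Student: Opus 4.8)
The plan is to prove the strip lemma by induction on the length $n$ of the reduction $s_1 \to^* s_2$, pushing the single step $s_1 \to^= s_2'$ along the long reduction one tile at a time. Each tile is closed by the relaxed strong local confluence lemma (\texttt{SLC\_top\_relaxed}, diagram $\text{SLC}^=$), and the renaming-equivalences $\approx$ that $\text{SLC}^=$ introduces are transported across the remaining reduction using the mimicking lemma (Lemma~\ref{lem:mimic_trans}, diagram $\mathcal{M}^*$). It is precisely this transport that the original account overlooks: without a mimicking property one cannot realign the two sides of the diagram once an equivalence has been introduced.

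In the base case $n = 0$ we have $s_2 = s_1 \to^= s_2'$, so taking $s_3 = s_3' = s_2'$ suffices, since $\to^= \subseteq \to^*$ and $\approx$ is reflexive. For the inductive step I would split the long reduction as $s_1 \to u \to^* s_2$, where $u \to^* s_2$ has length $n-1$. Applying $\text{SLC}^=$ to the peak $u \leftto s_1 \to^= s_2'$ (legitimate because $s_1$ is reachable and $\textit{Rid}$, $\textit{Lid}$ are infinite) yields states $v, v'$ with $u \to^= v \approx v' \leftto^= s_2'$. Since $s_1$ is reachable and $s_1 \to u$, the state $u$ is reachable by \texttt{reachability\_closed\_under\_execution}, so the induction hypothesis applies to $s_2 \leftto^* u \to^= v$ and produces $a, b$ with $s_2 \to^* a \approx b \leftto^* v$.

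It remains to glue the two completed regions, and this is where the handling of equivalences is the crux. From $b \leftto^* v$ together with $v \approx v'$ I would invoke Lemma~\ref{lem:mimic_trans} on $b \leftto^* v \approx v'$ to obtain a state $c$ with $v' \to^* c$ and $c \approx b$. Combining $v' \to^* c$ with the right edge $v' \leftto^= s_2'$ of the $\text{SLC}^=$ tile, i.e. $s_2' \to^= v'$, gives $s_2' \to^* c$, that is $c \leftto^* s_2'$; meanwhile transitivity and symmetry of $\approx$ turn $a \approx b$ and $c \approx b$ into $a \approx c$. Choosing $s_3 = a$ and $s_3' = c$ then discharges the goal $s_2 \to^* s_3 \approx s_3' \leftto^* s_2'$. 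The main obstacle is not the inductive skeleton, which is the standard strip-lemma tiling, but keeping the chain of $\approx$-equivalences coherent across the transported reductions; the mimicking lemma is exactly the ingredient that makes this closure go through, and checking its interaction with the reflexive-transitive closures and with the reachability side conditions inherited by $\text{SLC}^=$ is where the real care is required.
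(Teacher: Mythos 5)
Your proof is correct and is essentially the paper's own argument: the same induction on the length of $s_1 \to^n s_2$ with the first step peeled off at $s_1$, the same three tiles ($\text{SLC}^=$ on the new peak, the induction hypothesis legitimized via \texttt{reachability\_closed\_under\_execution}, and \texttt{mimic\_trans} to transport the equivalence across the residual reduction), and the same final gluing by transitivity and symmetry of $\approx$. No gaps to report.
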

\begin{proof}
	By induction on the length $n$ of $s_2 \leftto^n s_1$. The Isabelle proof of the inductive step is visualized by the following diagram, in which $\twoheadrightarrow$ depicts $\to^*$:
	\begin{center}
		\begin{tikzcd}[row sep=0.01em, column sep=1em]
			s_1 \arrow[dd] \arrow[rrrr, "="] &  &  & & s_2' \arrow[dd] \\
			&  & \text{SLC}^= &  &  \\
			a \arrow[dd, "n"'] \arrow[rr, "="] &  & b \arrow[dd, two heads] \arrow[rr, "\approx" description, no head] &  & c \arrow[dd, two heads] \\
			& \text{IH} &  & \mathcal{M}^* &  \\
			s_2 \arrow[r, two heads] & s_3 \arrow[r, "\approx" description, no head] & d \arrow[rr, "\approx" description, no head] &  & s_3'
		\end{tikzcd}
	\end{center}
\vspace{-6.1mm}
\qedhere
\end{proof}

\begin{lemma}[\texttt{confluence\_modulo\_equivalence}]
Assume that $s_1$ is reachable and that \textit{Rid} and \textit{Lid} are infinite. Then
$
s_2 \leftto^* s_1 \approx s_1' \to^* s_2' \Longrightarrow \exists s_3 \ s_3' \ldotp \ s_2 \to^* s_3 \approx s_3' \leftto^* s_2'
$.
\end{lemma}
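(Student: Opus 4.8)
The plan is to prove this by diagram tiling, combining the two closing devices already established: the strip lemma (\texttt{strip\_lemma}), which closes a full left reduction against a single right step, and transitive mimicking (Lemma~\ref{lem:mimic_trans}, the diagram $\mathcal{M}^*$), which transports a reduction across an $\approx$-bridge. I would proceed by induction on the length $n$ of the right reduction $s_1' \to^n s_2'$. A preliminary observation is needed first: since $s_1 \approx s_1'$ and $s_1$ is reachable, $s_1'$ is reachable as well. This follows by applying the mimicking lemma (Lemma~\ref{lem:mimicking}) stepwise along an execution witnessing the reachability of $s_1$, together with the fact that a renaming carries initial states to initial states (an initial state's empty stores stay empty and its program expression, having no identifiers, renames to a program expression). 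Once $s_1'$ is known reachable, every state on $s_1' \to^* s_2'$ is reachable by \texttt{reachability\_closed\_under\_execution}, which is exactly what licenses invoking the strip lemma at those states.

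For the base case $n = 0$ we have $s_1' = s_2'$, so I would apply $\mathcal{M}^*$ to $s_2 \leftto^* s_1 \approx s_1'$ to obtain some $t$ with $s_1' \to^* t \approx s_2$, and take $s_3 = s_2$ and $s_3' = t$; the three goals $s_2 \to^* s_3$ (zero steps), $s_3 \approx s_3'$ (symmetry of $t \approx s_2$) and $s_3' \leftto^* s_2'$ (the reduction $s_1' \to^* t$) are then immediate. For the inductive step I would split the right reduction as $s_1' \to^* s_2'' \to s_2'$. Applying the induction hypothesis to $s_2 \leftto^* s_1 \approx s_1' \to^* s_2''$ closes the shorter diagram, yielding $s_4, s_4'$ with $s_2 \to^* s_4 \approx s_4' \leftto^* s_2''$. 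Since $s_2''$ is reachable, the strip lemma applies to the peak $s_4' \leftto^* s_2'' \to^= s_2'$ (the trailing single step is in $\to^=$), producing $t_3, t_3'$ with $s_4' \to^* t_3 \approx t_3' \leftto^* s_2'$. It then remains to paste the equivalence $s_4 \approx s_4'$ onto the fresh reduction $s_4' \to^* t_3$: by $\mathcal{M}^*$ there is a $u$ with $s_4 \to^* u \approx t_3$, so that $s_2 \to^* s_4 \to^* u$ and, by transitivity of $\approx$, $u \approx t_3 \approx t_3'$. Taking $s_3 = u$ and $s_3' = t_3'$ discharges the goal.

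The hard part will not be the diagram chasing itself, which is routine, but two pieces of bookkeeping. The first is the propagation of reachability: because both the strip lemma and strong local confluence only fire at reachable sources, I must settle that $\approx$ preserves reachability (via Lemma~\ref{lem:mimicking} and the stability of initial states under renaming) \emph{before} the induction can use the strip lemma along the $s_1'$-reduction. The second is the disciplined management of the $\approx$-bridges: each transport across an equivalence by $\mathcal{M}^*$ introduces a new equivalent state, and these must be chained using the symmetry and transitivity of $\approx$ (the earlier lemmas $\alpha\beta\texttt{\_sym}$ and $\alpha\beta\texttt{\_trans}$). Beyond these, the remaining work is concatenation of $\to^*$ segments and keeping track of the orientation of the mirrored arrows $\leftto^*$.
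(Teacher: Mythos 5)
Your proof is correct --- the diagram chase goes through, and the auxiliary fact you flag (reachability is preserved by $\approx$) is indeed provable from the mimicking lemma together with closure of initial states under bijective renaming --- but it takes a genuinely different route from the paper's formalized proof. You induct on the right reduction $s_1' \to^n s_2'$, so your strip-lemma applications sit at peaks $s_2''$ lying on the reduction out of $s_1'$, a state that is \emph{not} assumed reachable; this is exactly why you are forced to first establish that $\approx$ preserves reachability. The paper's proof, as shown in its tiling diagram, instead inducts on the \emph{left} reduction, splitting it as $s_1 \to^n a \to s_2$: the induction hypothesis is applied to $a \leftto^n s_1 \approx s_1' \to^* s_2'$ (full right reduction, shortened left reduction), the strip lemma is then applied at the peak $a$ against the single remaining step $a \to s_2$, and $\mathcal{M}^*$ closes the last corner. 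Since $a$ lies on a reduction out of the reachable state $s_1$, its reachability follows directly from \texttt{reachability\_closed\_under\_execution}, and no equivalence-preserves-reachability lemma is needed anywhere --- consistent with the fact that no such lemma appears in the paper's lemma inventory. Ironically, the paper's prose says ``induction on the length $n$ of $s_1' \to^n s_2'$'', which is precisely what you did, but that sentence is inconsistent with the paper's own diagram; the diagrammed (left-hand) induction is the one that works with the stated asymmetric hypothesis and no extra machinery. The trade-off: your route costs one additional lemma but yields the stronger byproduct that reachability transfers across $\approx$ (making the lemma effectively symmetric in $s_1$ and $s_1'$), whereas the paper's route minimizes bookkeeping by arranging that every peak where the strip lemma fires is reachable from $s_1$ by construction.
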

\begin{proof}
By induction on the length $n$ of $s_1' \to^n s_2'$. The Isabelle proof of the inductive step is visualized by the diagram below, in which STRIP denotes the strip lemma:
\begin{center}
	\begin{tikzcd}[row sep=0.01em, column sep=1em]
		s_1 \arrow[dd, "n"'] \arrow[r, "\approx" description, no head] & s_1' \arrow[rrr, two heads] &  &  & s_2' \arrow[dd, two heads] \\
		&  & \text{IH} &  &  \\
		a \arrow[dd] \arrow[rr, two heads] &  & b \arrow[dd, two heads] \arrow[rr, "\approx" description, no head] &  & c \arrow[dd, two heads] \\
		& \text{STRIP} &  & \mathcal{M}^* &  \\
		s_2 \arrow[r, two heads] & s_3 \arrow[r, "\approx" description, no head] & d \arrow[rr, "\approx" description, no head] &  & s_3'
	\end{tikzcd}
\end{center}
\vspace{-6.1mm}
\qedhere
\end{proof}

Finally, determinacy is obtained as a corollary by the same proof in the original account:
\begin{theorem}[\texttt{determinacy}]
Assume that $e$ is a program expression and that \textit{Rid} and \textit{Lid} are infinite. Then $e \downarrow s$ and $e \downarrow s'$ imply $s \approx s'$.
\end{theorem}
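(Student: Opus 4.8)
The plan is to derive determinacy as a corollary of confluence modulo $\approx$ (Lemma \texttt{confluence\_modulo\_equivalence}), which already does the heavy lifting; what remains is to connect the two maximal executions through a single reachable, $\approx$-equivalent source and then exploit terminality.

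First I would unfold the two hypotheses. From $e \downarrow s$ I obtain an initial state $s_1 = \epsilon(r \mapsto \langle \epsilon, \epsilon, e \rangle)$ and a \emph{maximal} execution $s_1 \to^* s$; maximality says precisely that $s$ admits no further step. Symmetrically, $e \downarrow s'$ gives an initial state $s_1' = \epsilon(r' \mapsto \langle \epsilon, \epsilon, e \rangle)$ with $s_1' \to^* s'$ and $s'$ terminal. Note that the two executions may choose different top-level revision identifiers $r$ and $r'$.

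Next I would show that the two initial states are renaming-equivalent, $s_1 \approx s_1'$. Because $e$ is a program expression we have $\RID \ e = \LID \ e = \varnothing$, and the snapshots and local stores are both empty; hence the swap $\alpha = \id(r := r', r' := r)$ paired with $\beta = \id$ witnesses $s_1 \approx_{\alpha\beta} s_1'$ (reducing to reflexivity when $r = r'$). I would also record that every initial state is reachable (via the zero-length execution from itself), so in particular $s_1$ is reachable, and that \textit{Rid} and \textit{Lid} are infinite by assumption.

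With these facts in hand I would instantiate Lemma \texttt{confluence\_modulo\_equivalence} at the configuration $s \leftto^* s_1 \approx s_1' \to^* s'$, obtaining $s_3, s_3'$ with $s \to^* s_3 \approx s_3' \leftto^* s'$. The final step is to collapse these reductions: since $s$ is terminal, $s \to^* s_3$ must be the empty reduction, so $s = s_3$, and symmetrically $s' = s_3'$; chaining the equivalences then yields $s \approx s'$. I expect no genuine obstacle here, as the real difficulty lives in confluence modulo $\approx$, which is already available. The only points demanding care are the equivalence of the two initial states (which hinges essentially on the strengthened reading of ``program expression'' as $\RID \ e = \LID \ e = \varnothing$) and the observation that maximality lets us turn the residual $\to^*$ reductions into equalities.
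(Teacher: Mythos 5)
Your proposal is correct and follows essentially the same route as the paper, which obtains determinacy as a corollary of \texttt{confluence\_modulo\_equivalence} (as in the original account): instantiate confluence at $s \leftto^* s_1 \approx s_1' \to^* s'$ and use maximality to collapse the residual reductions. Your handling of the details the paper leaves implicit --- the swap $\alpha = \id(r := r', r' := r)$ witnessing equivalence of the two initial states (valid because $\RID\ e = \LID\ e = \varnothing$), and reachability of $s_1$ via the zero-length execution --- is exactly what the formalization requires.
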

    
\section{Discussion}
\label{sec:discussion-and-related-work}

Our formalization contributes to the metatheory of concurrent revisions in two ways. First, it demonstrates that interpreting the \fork side condition  as \eqref{eq:fork_wrong} leads to nondeterminacy (Section \ref{sec:fork_side}). Second, it shows that the side condition on \new admits a weaker formulation, and that the side conditions on \get and \set are redundant (Section~\ref{sec:new_side}).

More pragmatically, what are the implications of our findings for the existing C\# \cite{burckhardt2010concurrent} and Haskell \cite{leijen2011prettier} implementations of CR? It does not seem like our counterexample in Section~\ref{sec:fork_side} is reproducible in either language. Based on the provided C\# fragments and explanations~\cite{burckhardt2010concurrent},  a ``revision identifier'' is simply a reference to an object instance of a \texttt{Revision} class. Thus, when a revision has a join pending on some object, it cannot be garbage collected, and a concurrent fork cannot replace it. Experiments in an official online environment\footnote{\url{https://rise4fun.com/Revisions}} are consistent with this analysis: join operations do not affect the hash code of a revision object $r$, and subsequent joins on $r$ return an exception. The Haskell implementation has similar characteristics, and the authors explain that a revision's data is replaced with an exception when it is joined.

Our tiling proof for determinacy clarifies that determinacy does not rely on strong local confluence only, but also on the mimicking property. While we think that one could reasonably argue that the mimicking property is too minor to mention in a paper proof, we nonetheless contend that it is valuable to have made the dependence explicit, especially if model extensions (such as a generalization of \join's merge policy) are to be considered.

We see at least three ways in which future work could meaningfully extend the formalization presented in this paper. First, the other results in the original account could also be formalized. In particular, we think that the theorem asserting the existence of a unique gca for every pair of states in a revision diagram would be interesting to formalize, since the property is important, and its paper proof relatively involved. Second, rule \join could be generalized to support custom merge functions. Third, the calculus could be extended with features that are part of the concurrent revisions project, but not yet formalized, such as support for incremental computation \cite{burckhardt2011two}. 

We think such extensions can leverage our formalization in two ways. First, all of the elementary definitions and the associated results can be directly reused, such as the unique decomposition lemma, the result that $\approx$ is an equivalence, and the lemmas required for reasoning about occurrences and renamings. Such reuse would eliminate a lot of tediousness from the formalization effort. Second, since most of our proofs are written using the structured Isar proof language, it should be quite easy to modify these proofs when, for instance, additional rules are added to the calculus: any newly generated cases can be straightforwardly integrated into the existing proofs. We consider this high degree of maintainability a great advantage of using Isabelle\slash HOL.

\subparagraph{Related Work} 
Manovit et al. \cite{manovit2006testing} developed a formal axiomatic framework and pseudorandom testing methodology for TM systems, and used it to uncover bugs in the relatively well-known Transactional memory Coherence and Consistency (TCC) \cite{hammond2004transactional} system. Cohen et al.\ \cite{cohen2008mechanical} and Doherty et al.~\cite{doherty2013towards} both developed frameworks for the formal verification of TM implementations, using the interactive theorem prover PVS. Doherty et al.~\cite{doherty2017proving} presented the first formal verification of a pessimistic (i.e., non-aborting) software transactional memory (STM) algorithm using Isabelle/HOL, extending a refinement strategy pursued in~\cite{doherty2013towards}. Abadi et al.~\cite{abadi2008semantics} developed a formal semantics for the transactional Automatic Mutual Exclusion model, and used it to study design trade-offs and errors that occur in known STM implementations.

\section{Conclusion}
\label{sec:conclusion}

We presented the first formal verification of the semantics of the concurrent revisions concurrency control model. We identified and resolved a number of ambiguities in the operational semantics, and simplified a proof of determinacy. Our paper can hopefully serve as a case study for the verification of concurrency control models, and the Isabelle\slash HOL artifact can be used as a basis for developing and verifying extensions of the concurrent revisions model.

\begin{acks}

I thank Jasmin Blanchette, Robbert van Dalen, Wan Fokkink, Hans-Dieter Hiep, Johannes H\"olzl and the anonymous reviewers for their useful discussions and/or generous feedback on preliminary versions of this manuscript. I'd also like to thank Sebastian Burckhardt for answering some of my questions and for expressing his interest in my formalization.

This paper was partially written at Centrum Wiskunde \& Informatica (CWI), Amsterdam, where it received funding from the Netherlands Organization for Scientific Research (NWO) under the COMMIT2DATA program (project No.\ 628.011.003, ECiDA). The present funding is from NWO under the Innovational Research Incentives Scheme (project No.\ VI.Vidi.192.004). The master's thesis for which the research was originally conducted was partially funded by ING.
\end{acks}

\balance 

\bibliography{bib}

\end{document}